\theoremstyle{definition}
\newtheorem{definition}{Definition}
\newtheorem{lemma}{Lemma}
\newtheorem{theorem}{Theorem}
\newtheorem{corollary}{Corollary}
\newcommand{\prob}{\mathbb{P}}
\newcommand{\Ex}{\mathbb{E}}
\newcommand{\Var}{\mathbb{V}\mathrm{ar}}
\newcommand{\ve}[1]{\boldsymbol{#1}}
\newcommand{\R}{\mathbb{R}}
\newcommand{\C}{\mathbb{C}}
\newcommand{\snr}{\text{SNR}}
\newcommand{\tr}{\text{tr}}
\newcommand{\defeq}{\triangleq}
\newcommand{\diag}{\text{diag}}
\newcommand{\Hip}{\mathcal{H}}
\newcommand{\CN}{\mathcal{CN}}
\newcommand{\clust}{\mathcal{C}}
\begin{document}
\title{Exploiting Spatial Correlation in Energy Constrained Distributed Detection}

\author{{Juan Augusto Maya,~\IEEEmembership{Student Member,~IEEE,} Cecilia G. Galarza and Leonardo Rey Vega,~\IEEEmembership{Member,~IEEE.}}

\thanks{The authors are with the University of Buenos Aires (UBA), Buenos Aires, Argentina. L. Rey Vega and C. Galarza are also with the CSC-CONICET, Buenos Aires, Argentina. 
Emails:  \{jmaya, lrey, cgalar\}@fi.uba.ar} %
\thanks{This work was partially supported by the Peruilh grant (UBA), and the projects UBACYT 20020130100751BA and CONICET PIP 112 20110100997.} %
}



\maketitle

\begin{abstract}
We consider the detection of a correlated random process immersed in noise in a wireless sensor network. Each node has an individual energy constraint and the communication with the processing central units are affected by the path loss propagation effect. Guided by energy efficiency concerns, we consider the partition of the whole network into clusters, each one with a coordination node or \emph{cluster head}. Thus, the nodes transmit their measurements to the corresponding cluster heads, which after some processing, communicate a summary of the received information to the fusion center, which takes the final decision about the state of the nature. As the network has a fixed size, communication within smaller clusters will be less affected by the path loss effect, reducing energy consumption in the information exchange process between nodes and cluster heads.  However, this limits the capability of the network of beneficially exploiting  the spatial correlation of the process, specially when the spatial correlation coherence of the process is of the same scale as the clusters size. Therefore, a trade-off is established between the energy efficiency and the beneficial use of spatial correlation.  The study of this trade-off is the main goal of this paper.  We derive tight approximations of the false alarm and miss-detection error probabilities under the Neyman-Pearson framework for the above scenario. We also consider the application of these results to a particular network and correlation model  obtaining closed form expressions. Finally, we validate the results for more general network and correlation models through numerical simulations. 

\end{abstract}

\begin{IEEEkeywords}
Distributed detection; correlated measurements; energy and bandwidth constraints; wireless sensor networks; hierarchical clustering; spectrum sensing; cognitive radio.   
\end{IEEEkeywords}
\IEEEpeerreviewmaketitle

\section{Introduction}

\IEEEPARstart{D}{istributed}
detection based on wireless sensor networks (WSN) is a topic which has attracted great interest in recent years (see \cite{veeravalli-varshney-2012} and references therein). A typical WSN has a large number of sensor nodes which are generally low-cost battery-powered devices with limited sensing, computing, and communication capabilities. Sensors acquire noisy measurements, perform simple data processing and propagate the information into the WSN to reach a decision about a physical phenomenon occurring in the coverage area. 

Network resources, such as energy and bandwidth, are scarce, expensive, and key variables when the design is focused on processing latency and detection performance. Clever data processing strategies are required to maximize performance under resource constraints. Appropriate choices of the processing strategy could largely impact on the total cost or on the life cycle of the WSN, and as such, make its deployment on remote locations economically viable or not. 
       
\subsection{Related work}
Distributed detection theory has been much studied in the past. Starting with the seminal work of Tenney and Sandell \cite{Tenney_1981}, several results have been derived on how each node process the available information and communicates with the \emph{fusion center} (FC) where the final decision on the true state of nature is taken. Under this setup, digital transmission schemes, where appropriate quantization rules have to be designed have been considered in \cite{VarshneyDistDet, TsitsiklisDescDet, ChamberlandVeeravalli2003} (see also references therein). On the other hand, analog communication schemes were also studied in the past (see for example \cite{Sayeed2003DistClassifClusters, Veeravalli_2003, ChamberlandVeeravalli2004,LiDai_DetSignalMAC,CohenLLRMAC,Bianchi_2011}).


Instead of independent observations, often assumed in the previously mentioned works, 
we explore a more realistic scenario with spatial dependence measurements. Distributed detection problems with correlated observations can be considerably more challenging than their conditionally independent counterparts \cite{willett2000good}. Moreover, new theoretical frameworks have been proposed recently \cite{Varshney2012NewFramework, Maya_2015}.

The design of distributed processing strategies exploiting spatial and/or temporal correlation among data is, in general, an open problem. It is well known that signal correlation can help to improve the detection performance, specially when low quality sensor measurements are available \cite{ChamberlandVeeravalli2004,PoorLattice2D,Veeravali2008CoopSensing}. Clearly, a clever use of the correlation in space and/or time requires cooperation among nodes. This problem has also been studied in previous works \cite{Sung_2007,Tong_2007, Varshney2012FusingData,Veeravali2008CoopSensing}. Additional studies have explored the effects of correlation on the performance of distributed detection systems \cite{Aalo1989_DDCorrSensors,Drakopoulos1991_CorrDecisions}. 
 
An important application that has gained a lot of attention in the recent years under this scenario is spectrum sensing for cognitive radio \cite{yucek2009survey,2015PoorSpectrumSensing}. Here, unlicensed, or secondary users (SU), want to detect the presence or absence of the licensed, or primary users (PU), in order to use the spectrum when it is available. SUs could cooperate and build a WSN to achieve this task. When a PU uses the spectrum, it transmits a signal that may be sensed by the WSN. Due to the shadowing propagation effect, different close nodes sense correlated measurements. Thus, dependent observations must be considered in order to develop reliable spectrum sensing schemes. Moreover, the signal model and the scheme developed in this work can be directly applied to a cognitive radio scenario \cite{Veeravali2008CoopSensing}.         

Typically, the nodes of a WSN are distributed on a vast area to sense the whole region of interest. Thus, the energy needed for reliable communication increases with the transmision distance due to the effect of path loss. In order to save transmission energy, it is convenient to reduce the region where sensors transmit high precision data. In this way, a hierarchical WSN is introduced: sensors are grouped in clusters with a designated cluster head (CH). Each CH processes the data independently of the other ones and communicates a summary of the information gathered by its nodes to the FC, where the final decision about the state of the nature is made. Previous works \cite{Sayeed2003DistClassifClusters}, \cite{Varshney2005distributed}  have already considered the idea of partitioning the region to be sensed into clusters. However, in \cite{Varshney2005distributed} the authors considered conditionally independent and identically distributed measurements, and in \cite{Sayeed2003DistClassifClusters}, spatially coherence regions were assumed, but dependence among them is not taken into account. 


\subsection{Contributions}
In this paper, we develop a clustering WSN model with energy and bandwidth constraints. We assume that each cluster processes the information independently of other clusters. Moreover, we consider that each individual cluster only knows the local correlation structure of the process. This hypothesis simplifies the scheme in the sense that communication of high data rate information between CHs or nodes from different clusters is avoided. As these exchanges would require to overcome path loss effects, an energy saving in each node will be obtained. However, in a WSN with correlated measurements, this strategy limits the ability of the network of beneficially exploiting the correlation of the process. Therefore, a trade-off is established between the beneficial use of the underlying random process spatial correlation and the energy spent by each node in the network. This is a trade-off between the energy required to reliably communicate each node measurement and the ability to effectively use the cooperation scheme required for exploiting the spatial correlation information.

To characterize this trade-off, we derive tight approximations of the false alarm and miss-detection error probabilities for statistics built with sum of independent although not necessarily identically distributed random variables. The expressions involve the sum of the logarithmic moment generating function (LMGF) of each random variable and its first and second derivatives. For example, in this paper we consider Gaussian processes for which the statistic is a quadratic form of a Gaussian correlated random vector. In this case, the statistic can be expressed as a weighted sum of i.i.d. random variables and the error probabilities can be easily computed.   

We obtain closed forms expressions for a particular model of the sensed process. Also, we numerically  compute the average performance for more general processes and when the sensors are randomly placed in space and show that similar conclusions can be drawn. These conclusions are summarized as follows: 
\begin{enumerate}
\item In a WSN with Gaussian measurements there are two clearly different operating regimes: an energy limited regime, where the energy dedicated to transmit the data limits the performance, and a correlation-limited regime, where the performance is limited by the poor use of the spatial correlation of the process. The ranges of validity of each of these regimes are given by different parameters of the WSN some of which are under control of the designer.  The characterization of this problem gives crucial information for the deployment of a WSN.  
\item In some cases of interest the use of local correlation information is almost as good as the use of the global correlation statistics of the sensed processes. This is a very important issue from an implementation point of view, as the use of global correlation information requires a greater degree of complexity in the network.
\item The use of local correlation information becomes more profitable when the sensor measurements are less reliable (i.e., the measurement signal to noise ratio is low).
 \end{enumerate}

\subsection{Notation and Organization}
Scalars are written in italic, vectors in boldface and matrices in capital letters. $A_n$ and $B_{nm}$ are  square and  rectangular matrices of sizes $n\times n$ and $n\times m$, respectively. 
We define column vectors with components $a_k$ as $\ve a =[a_k]_{k\in\mathcal{A}}$, where $k$ belongs to the set of indexes $\mathcal{A}$. Similarly, we denote matrices  $A_{nm}=[\ve{a}_{k}]_{k\in\mathcal{A}}$ built through the horizontal concatenation of the column vectors $\{\ve{a}_{k}\}$, where $\ve a_k\in\C^{n\times1}$ and the cardinality of $\mathcal{A}$ is $|\mathcal{A}|=m$.
$\det (A_n)$ and $\tr (A_n)$ are the determinant and the trace of $A_n$, respectively. $(\cdot)^T$ and $(\cdot)^H$ denote transpose and transpose conjugate. 
We do not distinguish between random variables and their realization values when the context is clear. $p(\ve{x}|\Hip)$ is the probability density function (p.d.f.) of $\ve{x}$ under $\Hip$. $\prob_i(\cdot)$, and $\Ex_i(\cdot)$ denote probability and the expectation respectively, both computed under hypothesis $\Hip_i$.
Given two statistics $T$ and $T'$, we denote that they have the same error probabilities writing $T\equiv T'$. 
Given a function $\mu(\cdot)$, its first and second derivatives are denoted as $\dot{\mu}(\cdot)$ and $\ddot{\mu}(\cdot)$. The notation $f(n)= \mathcal{O}\left( g(n)\right)$ means that $\lim_{n\rightarrow \infty} f(n)/g(n) = c$, where $c$ is a non-zero constant. 

The paper is organized as follows. In Section \ref{sec:model}, we present the WSN model and the binary hypothesis testing problem. In Section \ref{sec:precoding}, we describe the precoding strategies used for stationary and non-stationary random processes. In Section \ref{sec:globalStats}, we derive a new global statistic and formulate the main problem of the paper. In Section \ref{sec:errorProb}, we compute tight approximations of the error probabilities to be used in Section \ref{sec:performance} to analyse the performance of the proposed strategy. Finally, in Section \ref{sec:conclusions} we elaborate on the main conclusions. Technical proofs are provided in the appendices.
 
\section{WSN Model}
\label{sec:model}
We analyse a sensor network with three hierarchies: nodes, clusters with their corresponding CH, and the FC. The nodes are grouped in clusters. Each CH gathers information from the nodes in its corresponding cluster, performs some processing independently of other clusters and transmit a summary to the FC, where the final decision is made. A sketch of the WSN is shown in Fig. \ref{fig:generalWSN}.

\begin{figure}[hbt]
\centering
\includegraphics[width=1\linewidth]{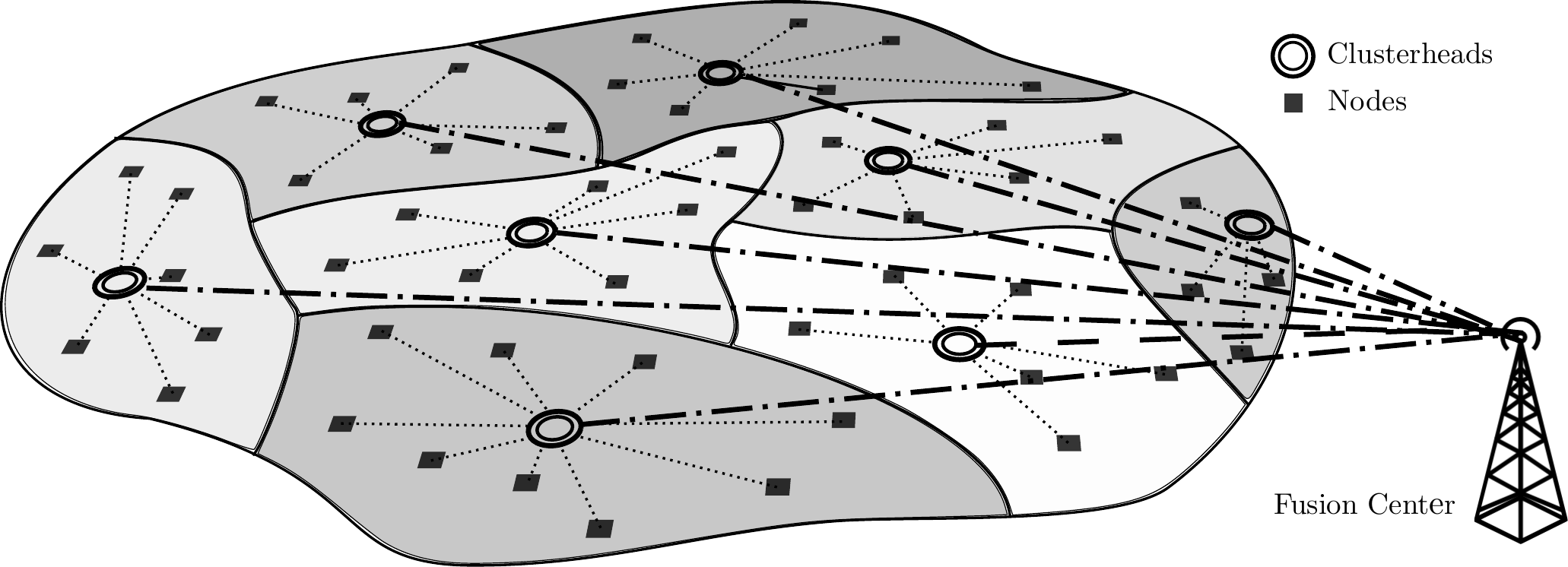}
\caption{A hierarchical WSN, where the information is propagated from the nodes to the CHs, and from them to the FC, which decides about the state of the nature.}
\label{fig:generalWSN}
\end{figure}
The WSN has $m$ clusters and a total amount of $n$ nodes which are arbitrary distributed in an area $nA_0$, which for simplicity it will be assumed to be a square and $A_0$ a constant value (with $m^2$ units). Each cluster has $l_i$ nodes, $i=1,\dots,m$, such that $n=\sum_{i=1}^m l_i$. Let $\clust_i$ be the set of sensors of the $i$-th cluster, $i=1,\dots,m$. $\{\clust_i\}_{i=1}^m$ is a partition of the set of sensors $[1:n]$, i.e., $\clust_i\cap\clust_j=\emptyset\ \forall i \neq j$, $i=1,\dots,m$, and $\cup_{i=1}^m \clust_i=[1:n]$. In the following, we provide a description on the main elements of the general WSN model we will work with.
 
The $k$-th sensor of the WSN takes a measurement described under each hypothesis by
\begin{equation}
\left\{
\begin{array}{llll}
\Hip_1: & y_k&=s_k + v_k,& \\
\Hip_0: & y_k&=v_k,& k=1,\dots,n,
\end{array}
\right.
\label{eq:GenMod}
\end{equation} 
where $s_k$ is a correlated zero-mean circularly-symmetric complex Gaussian process with variance $\sigma^2_s$ and autocorrelation function (ACF) $\mathbb{E}[s_js_k]=R_s(j,k)$ with $j,k\in[1:n]$. In general, $\{s_k\}$ is a non-stationary spatial process.  
We also assume that $v_k$ is a zero-mean circularly-symmetric complex white Gaussian noise independent of $s_k$ with variance $\sigma_v^2$. Thus, $y_k$ is Gaussian distributed either under $\Hip_0$ or $\Hip_1$. The above model can be easily generalized to the case in which each sensor takes several measurements at different time instants, and the processes have temporal correlation. All the mathematical developments and conclusions in this paper can be extended to that case. However, in order to keep things simple we consider only the case in which each sensor takes only one measurement. This is in line with several works on distributed detection on WSN \cite{ChamberlandVeeravalli2004,LiDai_DetSignalMAC,CohenLLRMAC}.

Define the vector $\ve{y}_i= [y_k]_{k\in\clust_i}$ as the set of measurements taken by the nodes in the $i$-th cluster. Define $\ve{s}_i= [s_k]_{k\in\clust_i}$ and $\ve{v}_i= [v_k]_{k\in\clust_i}$ accordingly. The covariance matrices of $\ve{s}_i$ and $\ve{v}_i$ are, respectively, $\Sigma_{s,l_i}$ and $\sigma_v^2 I_{l_i}$ where the elements of $\Sigma_{s,l_i}$ are $R_s(j,k)$, with $j,k\in \clust_i$, and $I_{l_i}$ is the identity matrix of dimension $l_i$. Therefore, the covariance matrix of $\ve{y}_i$ under $\Hip_0$ is $\Sigma_{0,l_i}=\sigma_v^2 I_{l_i}$, and under $\Hip_1$ is $\Sigma_{1,l_i} = \Sigma_{s,l_i}+\sigma_v^2 I_{l}$. Let $\ve{y}=[\ve{y}_i]_{i\in[1:m]}$ be the measurements taken by all nodes in the network and define $\ve{s}=[\ve{s}_i]_{i\in[1:m]}$, $\ve{v}=[\ve{v}_i]_{i\in[1:m]}$. The covariance matrices of $\ve{s}$ and $\ve{v}$ are, respectively, $\Sigma_{s,n}$ and $\sigma_v^2 I_{n}$ where the elements of $\Sigma_{s,n}$ are $(\Sigma_{s,n})_{j,k}=R_s(j,k)$. The covariance matrix of $\ve{y}$ under $\Hip_0$ and $\Hip_1$, are, respectively, 
\begin{equation}
\Sigma_{0,n}=\sigma_v^2 I_{n}, \qquad \Sigma_{1,n} = \Sigma_{s,n}+\sigma_v^2 I_{n}. \label{eq:covy}
\end{equation}
Notice that $\Sigma_{0,n}$ and $\Sigma_{1,n}$ can be thought as $m\times m$ block matrices, where the diagonal blocks are $\Sigma_{0,l_i}$ and $\Sigma_{1,l_i}$, $i=1,\dots,m$.

\section{Cluster Precoding Strategies}
\label{sec:precoding}
In this work, we consider that the communication between the nodes and their corresponding CH is through either a \emph{parallel-access channel} (PAC) or multiple channel uses\footnote{We say that a channel use takes place when a symbol is transmitted.} of a \emph{multiple-access channel} (MAC). In the case of the PAC, the amount of channel uses coincides with the number of available orthogonal channels in each cluster. In both cases, we assume that the nodes communicate with its corresponding CH without any interference from other clusters. Consider now that the $i$-th cluster has $l_i'$ channel uses available for transmission, with $l_i'\leq l_i$. Each channel use is associated to either an orthogonal time slot or a frequency band. Then, the processing strategy may use $l_i'$ time slots, $l_i'$ frequency bands or a combination of them such that the product of time slots and frequency bands is $l_i'$. An illustration of the distributed scheme is shown in Fig. \ref{fig:Nodes2CH}. 
\begin{figure}[hbt]
\centering
\subfigure[\label{fig:Nodes2CH}]{\includegraphics[width=\linewidth]{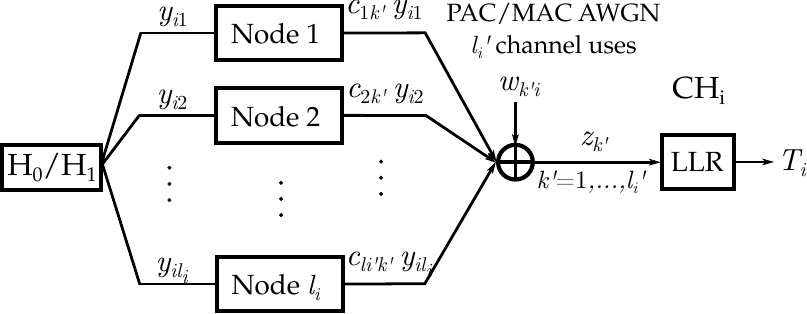}}
\vspace*{3mm}
\subfigure[\label{fig:CH2FC}]{\includegraphics[width=.6\linewidth]{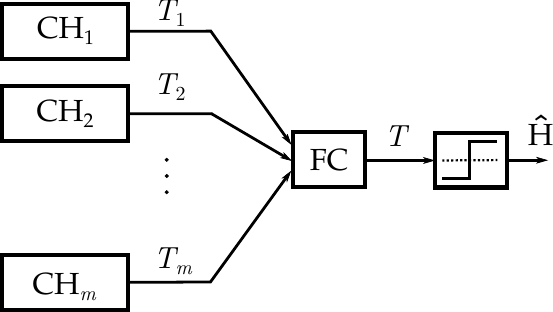}}
\caption{A distributed scheme for gathering information in a WSN. (a) The nodes use a PAC or a MAC channel to communicate with their corresponding CH. (b) The FC collects the processed information of each CH and takes a decision.}
\label{fig:wsnScheme}
\end{figure}

\subsection{The PAC precoding strategy}
If a PAC is considered, the received signal at the $i$-th CH is 
\begin{align}
z_{k}&= a_{k}\, c_{k}\, y_k + w_{k},\ \ k\in \mathcal{U}_i, \label{eq:zCH-PAC-1}
\end{align}
where $w_{k}$ is the zero-mean communication white noise with variance $\sigma_w^2$ and circularly-symmetric complex Gaussian distribution independent of the measurements $y_k$, $a_{k}=\|\ve x_k - \ve x_{\text{CH}_i} \|^{-\frac{\epsilon}{2}}$ is the factor associated with the path loss effect between the position of the node $\ve x_k\in \R^2$ and the position of the corresponding cluster head $\ve x_{\text{CH}_i}\in \R^2$, $\epsilon$ is the path loss exponent, $c_{k}$ is the precoding coefficient used by the $k$-th sensor, and $\mathcal{U}_i\subseteq \mathcal{C}_i$ is the set of channel uses indexes for the $i$-th cluster with cardinality $|\mathcal{U}_i|=l_i'$. Under this strategy, each sensor has a single channel use to communicate its measurement. 
If only $l_i'< l_i$ channel uses are available, $l_i-l_i'$ sensors in the cluster stay silent. 
Let $C_i=\diag((c_{k})_{k\in\mathcal{U}_i})$ be the \emph{precoding matrix} of the $i$-th cluster, $A_i=\diag((a_k)_{k\in \mathcal{U}_i})$, and define the column vectors $\ve{z_i}= [z_{k}]_{k\in\mathcal{U}_i}$, and $\ve{w}_i=[w_{k}]_{k\in\clust_i'}$ with covariance matrix $\sigma^2_w I_{l_i'}$. The received signal vector in each cluster can be expressed as
\begin{equation}
\label{eq:zCH}
\ve z_i = A_i\,C_i \ve{y}_i + \ve{w}_i, \ i=1,\dots, m.
\end{equation}
with covariance matrix under $\Hip_j$, $j=0,1$ given by,
\begin{align}
\Xi_{j,l_i}&=C_i^H A_i\Sigma_{j,l_i}A_iC_i+\sigma^2_w I_{l_i'}\label{eq:Xij_l}.
\end{align}

The average energy consumed by the $k$-th node in the corresponding $i$-th cluster during one of the $l_i'$ scheduled transmissions is:
\begin{align}
E_{k} = \mathbb{E}\left(|c_{k}y_k|^2 \right) = |c_{k}|^2(\sigma_v^2+p_1\sigma_s^2),
\label{eq:energyAvg}
\end{align}
where $p_1$ is the \emph{a priori} probability of the state of nature $\Hip_1$. 
When $p_1$ is unknown, a natural upper-bound for $E_k$ is obtained by taking $p_1=1$. In this work, we will consider the following individual sensor energy constraint:
\begin{equation}
E_k \leq \bar{E}\ \forall k \in [1:n].  
\label{eq:SensorEC}
\end{equation}

For the particular case when the sensed process is uncorrelated, and considering an individual sensor energy constraint, it can be easily proved that the optimal strategy consists in using all the available energy at each sensor. 
In this paper, we will consider this strategy even for correlated processes. The precoding strategy for a PAC channel named amplify and forward strategy (AFS-PAC) is defined next. 
\begin{definition}[AFS-PAC]
\label{PS-PAC}
Considering the $i$-th cluster, (\ref{eq:energyAvg}) and (\ref{eq:SensorEC}), the  precoding matrix for the AFS-PAC strategy is
\begin{equation}
C_i=\sqrt{\tfrac{\bar{E}}{\sigma_s^2+\sigma_v^2}} I_{l_i'}, \ i=1,\dots, m.
\label{eq:Prec-AFS-PAC}
\end{equation}
\end{definition}

\subsection{The MAC precoding strategies}
In the case that a MAC is used, each sensor communicates with its corresponding CH using different gains for each channel use. The signal collected at the $i$-th CH is a noisy version of the coherent superposition of the symbols transmitted by the $l_i$ sensors through the corresponding MAC:
\begin{align}
z_{k'}&=\sum_{k\in\clust_i} a_{k} c_{k k'} y_k + w_{k'},\ \ k'\in \mathcal{U}_i, \label{eq:zCH-MAC-1}
\end{align}
where $a_{k}$, $ w_{k'}$ and $\mathcal{U}_i$ are the same quantities defined above. Thus, the $k$-th sensor transmits in the $k'$-th channel use its measurement $y_k$ scaled by the precoding coefficient $c_{k k'}$ through the MAC channel. Stacking the measurements from each cluster, the same vector signal model (\ref{eq:zCH}) is obtained, although for the MAC channel, the precoding matrix defined by $C_i = [c_{kk'}]_{k\in\mathcal{C}_i,k'\in\mathcal{U}_i}$ is not, in general, a diagonal matrix. Considering (\ref{eq:zCH-MAC-1}), the average energy consumed by the $k$-th node in the corresponding $i$-th cluster during the $l_i'$ scheduled MAC channel uses is:
\begin{align}
E_{k} &= \mathbb{E}\left(\sum_{k'\in\mathcal{U}_i} |c_{kk'}y_k|^2 \right)=(\sigma_v^2+p_1\sigma_s^2)\sum_{k'\in\mathcal{U}_i} |c_{kk'}|^2.
\label{eq:energyAvg-MAC}
\end{align}



If $\{s_k\}$ is spatial-stationary, two schemes presented in \cite{Maya_2015} may be used to optimally allocate the energy and bandwidth in successive uses of the MAC channel. Both schemes, called \emph{principal components strategy in a multiple-access channel} (PCS-MAC) and \emph{principal frequency strategy in a multiple-access channel} (PFS-MAC), are asymptotically the best strategies among the orthogonal schemes considering energy and bandwidth constraints and the error exponents as metrics of performance. Although error exponents are sometimes useful to compare and define strategies of large networks, they do not provide an exact characterization of the error probabilities, which are the typically required  metrics of performance for WSNs. Therefore, in this work, we mainly use the error probabilities as metrics of performance. 
Because the optimum strategy for a MAC, in terms of error probabilities, remains an open problem, we will use the PFS-MAC strategy (which is optimal in terms of error exponents) and show that good results are obtained. 

On the other hand, 
if $\{s_k\}$ is not spatial-stationary, in general, the covariance matrix of the process does not have a regular structure (i.e. Toeplitz form). However, in this case the PCS-MAC scheme presented in \cite[Def. 2]{Maya_2015} may be used to optimally allocate the transmission energy and bandwidth. This scheme requires a built infrastructure that might not available be available on any WSN. For that reason, in the spatial non-stationary case, we will consider the AFS-PAC strategy.


Next, we describe the PFS-MAC strategy for stationary processes. During each channel use, each CH receives a noisy version of a given frequency bin of the discrete Fourier transform (DFT) of the corresponding measurement vector $\ve y_i$. The idea behind this strategy is to transmit to the corresponding CH the \emph{most distinguishing} spectral components of the power spectral density (PSD) of the sensed process $\phi(\nu)$ (achieved when $l$ grows unbounded).
In particular, we will use for each cluster a simple almost-optimal solution that employs the PFS-MAC scheme with a constant energy profile (EP) on a given set of transmission frequencies.  The strategy, called ON/OFF-EP (see \cite{Maya_2015}), is defined as follows:
\begin{definition}[PFS-MAC with ON/OFF-EP]
\label{def:PFS-MAC}
Consider the $i$-th cluster. For a given $l_i$, denote by $(j_1,j_2,\dots,j_{l_i})$ a permutation of $\{1,2,\dots,l_i\}$ such that $\phi\left(\frac{j_1}{l_i}\right) \geq\phi\left(\frac{j_2}{l_i}\right)\geq \cdots \geq \phi\left(\frac{j_{l_i}}{l_i}\right)$. Let $l_i'$ be the number of channel uses or  bandwidth constraint. Let $\beta_i=\frac{l_i'}{l_i}$ be the fraction of channel uses (or \emph{bandwidth}) used at cluster $i$. Then, the set of channel uses is defined by $\mathcal{U}_i= \{j_1,\dots,j_{l_i'} \}$ and the precoding matrix of the PFS-MAC strategy with ON/OFF-EP is
\begin{equation}
C_i=\gamma_i A^{-1}_{i} F_{l_i l_i'}, \ i=1,\dots,m,
\label{PFS-matrix-pD}
\end{equation}
where $F_{l_i l_i'}=[\ve{\zeta}_{j_1},\dots,\ve{\zeta}_{j_{l_i'}}]\in\C^{l_i\times l_i'}$ is a sub-matrix of the normalized DFT matrix of size $l_i\times l_i$, i.e.,  
$\ve{\zeta}_{k'}=[f_{1k'},f_{2k'},\dots, f_{l_i k'}]^T$ with $f_{kk'} = \frac{1}{\sqrt{l_i}}\exp\left(\jmath 2\pi \frac{(k-1) (k'-1)}{l_i}\right)$, $k=1,\dots,l_i$; the diagonal matrix $A^{-1}_{i}$ performs a channel  inversion to compensate for the different gains introduced by the path loss. Using (\ref{PFS-matrix-pD}) in (\ref{eq:energyAvg-MAC}), and considering the individual sensor energy constraint (\ref{eq:SensorEC}), we obtain that $\gamma_i$ is limited by the longest distance between the nodes and the CH of the $i$-th cluster. Then, 
\begin{equation}
\gamma_i= \sqrt{\tfrac{\bar{E}}{(\sigma_s^2+\sigma_v^2)\beta_i (d_{\text{max},i})^\epsilon}}, \ i=1,\dots, m,\label{eq:Prec-PFS-MAC}
\end{equation}
where $d_{\text{max},i} = \max_{k\in\clust_i} \|\ve x_k-\ve x_{\text{CH}_i}\|$.
\end{definition}

Using the above scheme, the covariance matrices of $\ve{z}_i$ under $\Hip_1$ and $\Hip_0$ are, respectively,
\begin{align}
\label{eq:cov_zlp1}
\Xi_{1,l_i'}&= \gamma_{i}^2 F_{l_i l_i'}^H\Sigma_{s,l_i} F_{l_i l_i'} + (\gamma_{i}^2\sigma^2_v + \sigma^2_w) I_{l_i'}\\
\Xi_{0,l_i'}&=(\gamma_i^2 \sigma^2_v + \sigma^2_w) I_{l_i'}.
\end{align}

\section{Detection Statistics}
\label{sec:globalStats}

\subsection{Full Correlation Strategy}
Suppose that the FC has direct access to all the measurements available in the CHs $\ve z =[\ve z_1^T,\dots,\ve z_m^T]^T$ through ideal noiseless channels. The covariance matrix of $\ve z$ under $\Hip_j$, $j=0,1$, is 
\begin{equation}
\Xi_{i,n'} = B_{nn'}^H\Sigma_{j,n} B_{nn'} + \sigma^2_w I_{n'},\label{eq:Xi-np}
\end{equation}
where $B_{nn'} = \diag(A_1 C_1, \dots, A_m C_m)$. 
Consider the Neyman-Pearson problem \cite{Kay_SSP} for a fixed false alarm probability level $\alpha$, where a false alarm event occurs when $\Hip_1$ is declared but $\Hip_0$ is true.
Let the logarithmic likelihood ratio (LLR) \cite{Kay_SSP} of the full correlation strategy (FCS) be:
\begin{align}
T_\text{FCS}(\ve{z}) &= \log \frac{p(\ve{z}|\Hip_1)}{p(\ve{z}|\Hip_0)}
\equiv \ve{z}^H \left( \Xi_{0,n'}^{-1}-\Xi_{1,n'}^{-1} \right)\ve{z},
\label{eq:FCS}
\end{align}
where  we discarded the constant term $\log\det(\Xi_{0,n'}\Xi_{1,n'}^{-1})$ without affecting the performance of the statistic. Now, under the Neyman-Pearson setting, the optimal decision rule chooses $\Hip_1$ if $T_\text{FCS}(\ve{z}) > \tau_n$, and $\Hip_0$ otherwise, where the threshold of the test $\tau_n$ depends on $\alpha$.

In a distributed setting, with noisy links from the CHs to the FC, this statistic provides a lower bound for the false alarm and miss-detection error probabilities. It is worth to mention that these lower bounds are not necessarily tight due to the fact that they do not contemplate the degradation effect introduced by the communication channel between the CHs and the FC. 



\subsection{Local Correlation Strategy}
When the process $\{s_k\}$ is spatially correlated, the covariance matrix $\Xi_{1,n'}$ in (\ref{eq:Xi-np}) is neither diagonal nor block-diagonal. Hence, (\ref{eq:FCS}) cannot be expressed as the sum of the local LLRs from each cluster, and processing the data independently does not lead to the global LLR.
However, it is possible to implement simple distributed detection schemes to build an appropriate statistic at the FC using the compressed data transmitted by the CHs. 

The FCS statistic can be decomposed in such a way that each term is a function of one or at most two sets of measurements from each cluster, i.e., 
\begin{align}
T_\text{FCS}(\ve{z}) 
&= \sum_{i=1}^{m} \ve{z}_i^H \left(\Xi_{0,l_i'}^{-1}-\Xi_{1,l_i'}^{-1} \right)\ve{z}_i \nonumber\\& + \sum_{i=1}^{m} \ve{z}_i^H M_{ii} \ve{z}_i + \sum_{i=1}^{m} \sum_{j=1,j\neq i}^{m} \ve{z}_i^H M_{ij} \ve{z}_j.
\label{eq:FCS-2}
\end{align}
where $\Xi_{0,l_i'}$ and $\Xi_{1,l_i'}$ are the $(i,i)$-th block of $\Xi_{0,n'}$ and $\Xi_{1,n'}$, respectively, $M_{ii}$ is the difference between the $(i,i)$-th block of $\Xi_{1,n'}^{-1}$ and $\Xi_{1,l_i}^{-1}$; and $M_{ij}$ is the $(i,j)$-th block of $\Xi_{1,n'}^{-1}$. The first term in (\ref{eq:FCS-2}) captures most of the correlation between sensor measurements of the same cluster. We refer to this as the \emph{intra-cluster} correlation term. 
The second term captures part of the intra-cluster correlation not included by the first term. The third term considers the correlation between sensor measurements of different clusters and we refer to this as the \emph{inter-cluster} correlation term. 

Considering that the correlation between measurements is a local effect, it is expected that the inter-cluster correlation term does not affect too much the whole statistic if each cluster has an appropriate size respect to the coherence distance of the process\footnote{Measurements taken in sensors separated by a distance greater than the coherence distance of the process are weakly correlated.}. Keeping this effect in mind  and considering also that gathering measurements from distant nodes is energy expensive, it would be reasonable to use a scheme that takes into account only the intra-cluster correlation. Then, the FC could processed the data sent by the CHs without considering the inter-cluster correlation. The strategy is named local correlation strategy (LCS).

Let the local LLR at the $i$-th cluster be:
\begin{align}
T_{\text{LCS},i}(\ve{z}_i) &= \log \frac{p(\ve{z}_i|\Hip_1)}{p(\ve{z}_i|\Hip_0)} 
\equiv \ve{z}_i^H \left( \Xi_{0,l_i'}^{-1}-\Xi_{1,l_i'}^{-1} \right)\ve{z}_i.
\label{eq:Ti:b}
\end{align} 
Each CH is able to exploit the intra-cluster correlation by computing the local LLR $T_{\text{LCS},i}(\ve z_i)$ while keeping the communication energy cost relatively low. Moreover, no additional overhead is necessary to communicate and/or estimate the ACF between measurements of different clusters. Only the local intra-cluster ACF need to be known or estimated at each CH. 
In the next step of the scheme, each CH transmits its corresponding $T_{\text{LCS},i}(\ve z_i)$ to the FC. This communication can be done using a PAC or a MAC. When a PAC is used, since the inter-cluster correlation is discarded, the FC builds the global statistic adding the local statistics provided by the CHs, as suggested by the first term of (\ref{eq:FCS-2}):
\begin{align}
T_\text{LCS}(\ve z)&= \sum_{i=1}^{m} \ve{z}_i^H \left( \Xi_{0,l_i'}^{-1}-\Xi_{1,l_i'}^{-1} \right)\ve{z}_i.\label{eq:LCS-1}
\end{align}
On the other hand, if a MAC is used, the same statistic can be obtained. Notice, however, that the sum of the local statistics is naturally performed by sending them synchronously through the MAC channel. 
The channel communication between the CHs and the FC is assumed to be noiseless. We can justify this saying that usually there are much less CHs than nodes in the network and they typically have some infrastructure that the nodes do not have. For example, the CHs could have high directive antennas pointed to the FC, better energy budgets than the nodes, and, for the strategies considered in this paper, the CHs transmit scalar numbers which could be communicated using a reliable low rate coded digital modulation. For later developments, it is convenient to express the statistic (\ref{eq:LCS-1}) as follows
\begin{align}
T_\text{LCS}(\ve z)&= \ve z^H \Gamma_{n'} \ve z.
\label{eq:LCS-2}
\end{align}
where $\Gamma_{n'} = \diag\left(\Xi_{0,l_1'}^{-1}-\Xi_{1,l_1'}^{-1},\dots,\Xi_{0,l_m'}^{-1}-\Xi_{1,l_m'}^{-1}\right)$ is a $n'\times n'$ block diagonal matrix, with $n'=\sum_{i=1}^{m} l_i'$. 
Notice that, in general, $\Gamma_{n'}\neq \Xi_{0,n'}^{-1} - \Xi_{1,n'}^{-1}$. Hence, the first term of (\ref{eq:LCS-2}) is not exactly the LLR of $\ve z$ although it could be a good approximation if the size of the clusters are greater than the coherence distance of the correlated process.
In the FC, the statistic $T_\text{LCS}(\ve z)$ is compared against a threshold $\tau_n$ in order to decide about the true state of nature. 


Now, we can elaborate an analytical justification about why we ignore the inter-cluster correlation in the LCS strategy using (\ref{eq:FCS-2}). The difference between the FCS and LCS statistics is the matrix used to produce the quadratic form: while $\Xi_{0,n'}^{-1}$ is the same in both statistics because it is a diagonal matrix,  $\Xi_{1,n'}^{-1} \neq \diag (\Xi_{1,l_1'}^{-1},\dots,\Xi_{1,l_m'}^{-1})$. However, it can be shown, by using the Schur complement, that the Taylor approximation of order 0 of the $l_i'\times l_i'$ block matrices of $\Xi_{1,n'}^{-1}$ is exactly the block diagonal matrix $\diag(\Xi_{1,l_1'}^{-1},\dots,\Xi_{1,l_m'}^{-1})$. Clearly, this would be a good approximation when the inter-cluster correlation is weak.

\subsection{Energy detector}
We also consider the statistic usually called energy detector, which does not contemplate the correlation of the sensed process. Therefore, it will result useful to compare it with the previous statistics to show how much one can loose if the correlation is not taken into account. Once each CH has the vector of measurements $\ve z_i$ in (\ref{eq:zCH}), it builds $T_{\text{ED},i}(\ve z_i) = \|\ve z_i\|^2$ and sends it through a PAC or a MAC to the FC.
The energy detector available in the FC is defined by
\begin{align}
T_\text{ED}(\ve{z}) = \|\ve z\|^2.
\label{eq:TED}
\end{align}
Notice that this statistic is optimal only when the process $\{s_k\}$ is white noise process.


\subsection{Clustering partition problem}
\label{:clustering}
Now, we are ready to formulate the main problem of this paper. We first summarize the whole strategy used. The $n$ sensors of the WSN take measurements and transmit them to their corresponding CHs using the PFS-MAC strategy or the AFS-PAC strategy. Each CH has access to (\ref{eq:zCH}) and allows the FC to build a statistic $T_j$ using one of the previous strategies: $j\in\{\text{FCS, LCS, ED}\}$. The FC makes a decision about the state of the nature comparing $T_j$ against threshold  $\tau_j$. The false alarm and the miss-detection probability are, respectively, $P_\text{fa}^n=\prob_0(T_j>\tau_j)$ and $P_\text{m}^n=\prob_1(T_j<\tau_j)$.
 
The way each sensor node is assigned to each cluster is an important issue affecting the performance of the network. One may wonder which is the best allocation possible and which is its level of performance. To that end we can fix the probability of false alarm to a level $\alpha\in (0,1)$ and consider an energy constraint given by (\ref{eq:SensorEC}). Let $e_k$ be the label of the $k$-th sensor that identifies to which cluster it belongs, i.e.,  $e_k = i$ if $k$ belongs to the $i$-th cluster. We can then formulate a problem in order to obtain the optimal clustering partition that could be cast as follows:
\begin{gather}
 \inf_{(e_1,\dots,e_n)\in[1:m]^n} P_\text{m}^n,\nonumber\\ 
  \mbox{s.t.}\ P_\text{fa}^n\leq \alpha,
  E_k \leq \bar{E},\ \forall k\in[1:n].
  \label{eq:optimum}
\end{gather}
where the threshold of the test $\tau_j$ depends on $\alpha$. 
This is a challenging non-linear integer programing problem with constraints. In fact, it is a NP-hard problem and its solution is out of the scope of this paper. As our objective is to understand the trade-off between the beneficial use of spatial correlation against energy consumption in order to overcome the path-loss effect, we will consider a simpler clustering partition problem. We will assume that all the clusters have the same shape, in particular, we will assume that each of the $m$ clusters are square regions of area $l_i A_0$,  where $l_i$ is the number of sensors in the $i$-th cluster, and  $nA_0=\sum_{i=1}^m l_iA_0$ is total area in which the network is deployed. 
This simple model captures the basic fact that when the number of clusters is large, the size of them will be small, and the path loss effect in the local transmissions will be less harmful. However, as the clusters are smaller, less sensor nodes are present within them. In addition, some other nodes in their local neighborhoods, which can have strongly correlated measurements, could be assigned to other adjacent clusters. As the clusters do not cooperate and they make their processing independently, this clearly reduces the benefit of using the spatial correlation in each cluster. 
This basic model will be used in the following section with minor variations. For example, in Section \ref{sec:equicorr} we will assume that each cluster has exactly $l$ nodes arbitrary distributed in space. On the other hand, in Section \ref{sec:randomNetwork}, we will assume that the nodes are spatially distributed as a homogeneous Poisson Point Process (PPP) \cite{stochastic_geometry2009} where the average number of sensor per cluster is $l$, although for each realization of the spatial random process each cluster could have different number of nodes. In both cases, we variate the area of each cluster and, consequently, the number of sensors per cluster $l$ (or its average) but keeping constant the total amount of nodes $n$ in the sensor network (or its average). 



\section{Computation of the error probabilities}
\label{sec:errorProb}
In this section, we present a result used to compute tight approximations of the miss-detection probability $P_\text{m}^n$ and the false alarm probability $P_\text{fa}^n$ using the LMGFs of the statistic under $\Hip_0$ and $\Hip_1$.

\begin{theorem}[]
\label{thm:Pe}
Let $\{y_k\}_{k=1}^n$ be mutually independent random variables with probability density function (PDF) $p_k$ and LMGF $\mu_{k}(s)\defeq \log\Ex\left(e^{y_k s}\right)$. Assume that $\Ex(y_k^2)$ and $\Ex(|y_k-\Ex(y_k)|^3)$ exist and are finite  $\forall k\in[1:n]$. Let $T_n=y_1+\dots + y_n$ with LMGF $\mu_{T_n}\defeq \log\Ex\left(e^{T_n s}\right)$ and let $\tau_n\in \R$. Then, if $\tau_n> \Ex(T_n)$,  
\begin{align}
&\prob(T_n>\tau_n) =\nonumber\\
&\left(\tfrac{1}{\sqrt{2\pi s_0^2\ddot{\mu}_{T_n}(s_0)}} + \mathcal{O}\left(\tfrac{1}{\sqrt{n}}\right)\right) e^{-(s_0\dot{\mu}_{T_n}(s_0) - \mu_{T_n}(s_0))} \label{eq:generalPfa}
\end{align}
where $s_0>0$ is the solution to $\tau_n = \dot{\mu}_{T_n}(s_0)$. On the other hand, if $\tau_n < \Ex(T_n)$,  
\begin{align}
&\prob(T_n < \tau_n) =\nonumber\\
&\left(\tfrac{1}{\sqrt{2\pi s_1^2\ddot{\mu}_{T_n}(s_1)}} + \mathcal{O}\left(\tfrac{1}{\sqrt{n}}\right)\right) e^{-(s_1\dot{\mu}_{T_n}(s_1) - \mu_{T_n}(s_1))} \label{eq:generalPm}
\end{align}
where $s_1<0$ is the solution to $\tau_n = \dot{\mu}_{T_n}(s_1)$.
\end{theorem}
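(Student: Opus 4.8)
The plan is to prove (\ref{eq:generalPfa}) by the method of exponential tilting — the Esscher transform / saddlepoint argument that underlies the Bahadur--Rao refinement of Cram\'er's theorem — and then obtain (\ref{eq:generalPm}) by applying the same argument to $-T_n$. First I would record the structure of the saddlepoint. Since $\mu_{T_n}(s)=\sum_{k=1}^n\mu_k(s)$ is convex on the interior of its domain, $\dot\mu_{T_n}$ is strictly increasing there, with $\dot\mu_{T_n}(0)=\Ex(T_n)<\tau_n$; assuming, as is implicit in the statement, that the root $s_0$ of $\tau_n=\dot\mu_{T_n}(s_0)$ lies in the interior of that domain, convexity makes it unique, $s_0>0$, and $\ddot\mu_{T_n}(s_0)>0$ exists since the moment generating function is analytic in the interior of its domain.

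Next I would introduce, for each $k$, the tilted density $p_k^{(s_0)}(y)\defeq e^{s_0 y-\mu_k(s_0)}p_k(y)$ and the corresponding product measure $\prob^{(s_0)}$. Under $\prob^{(s_0)}$ the $\{y_k\}$ stay independent, with mean $\dot\mu_k(s_0)$ and variance $\ddot\mu_k(s_0)$, so that $W\defeq T_n-\tau_n$ is a sum of $n$ independent, zero-mean variables with total variance $\sigma_n^2\defeq\ddot\mu_{T_n}(s_0)$. Using $\prod_k p_k(y_k)=e^{-s_0 T_n+\mu_{T_n}(s_0)}\prod_k p_k^{(s_0)}(y_k)$ and integrating over $\{T_n>\tau_n\}$ gives the exact identity
\begin{equation}
\prob(T_n>\tau_n)=e^{-(s_0\dot\mu_{T_n}(s_0)-\mu_{T_n}(s_0))}\,\Ex^{(s_0)}\!\left[e^{-s_0 W}\bigone\{W>0\}\right],
\end{equation}
which already isolates the exponential factor appearing in (\ref{eq:generalPfa}).

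It then remains to estimate $I_n\defeq\Ex^{(s_0)}[e^{-s_0 W}\bigone\{W>0\}]$. Heuristically, the central limit theorem gives $W/\sigma_n\Rightarrow\mathcal N(0,1)$, and since $\sigma_n^2\to\infty$ (indeed $\sigma_n^2=\mathcal O(n)$) the Gaussian density is essentially the constant $(2\pi\sigma_n^2)^{-1/2}$ over the range $w\in[0,\mathcal O(1/s_0)]$ on which $e^{-s_0 w}$ is non-negligible, so
\begin{equation}
I_n\approx\frac{1}{\sqrt{2\pi\sigma_n^2}}\int_0^{\infty}e^{-s_0 w}\,dw=\frac{1}{s_0\sqrt{2\pi\sigma_n^2}}=\frac{1}{\sqrt{2\pi s_0^2\,\ddot\mu_{T_n}(s_0)}},
\end{equation}
which, combined with the exponential factor, yields the leading term of (\ref{eq:generalPfa}).

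The main obstacle is making this last step rigorous with the stated $\mathcal O(1/\sqrt n)$ accuracy, and I would handle it with the Berry--Esseen theorem applied under $\prob^{(s_0)}$: because $s_0$ lies in the interior of the domain, tilting preserves finiteness of the second and third absolute central moments — this is exactly where the hypothesis $\Ex(|y_k-\Ex(y_k)|^3)<\infty$ is used — so $\sup_{x\in\R}|\prob^{(s_0)}(W/\sigma_n\le x)-\Phi(x)|=\mathcal O(1/\sqrt n)$. Writing $I_n=\int_0^{\infty}e^{-s_0 w}\,d\,\prob^{(s_0)}(W\le w)$, integrating by parts, and comparing term by term with the same expression evaluated with $\Phi(\cdot/\sigma_n)$ in place of $\prob^{(s_0)}(W\le\cdot)$, the Berry--Esseen bound controls the discrepancy by $\mathcal O(1/\sqrt n)$, which is precisely the correction in (\ref{eq:generalPfa}). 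Finally, (\ref{eq:generalPm}) follows by repeating the whole argument with $-T_n$ in place of $T_n$, tilting parameter $-s_1>0$, and $W\defeq\tau_n-T_n$: the inequalities reverse, the constraint $s_1<0$ appears, and the computation is otherwise identical.
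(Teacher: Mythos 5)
Your proposal is correct and follows essentially the same route as the paper's own proof: exponential tilting to extract the Chernoff factor $e^{-(s_0\dot{\mu}_{T_n}(s_0)-\mu_{T_n}(s_0))}$, followed by a Berry--Esseen bound for sums of independent non-identically distributed variables (applied under the tilted measure) together with integration by parts to evaluate the remaining integral as $\tfrac{1}{\sqrt{2\pi s_0^2\ddot{\mu}_{T_n}(s_0)}}+\mathcal{O}(1/\sqrt{n})$. The treatment of the lower tail by symmetry matches the paper, which likewise omits that case as analogous.
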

\begin{proof}
See App. \ref{app:EPA-A}.
\end{proof}

\begin{corollary}[Gaussian quadratic form]
\label{cor:GQF}
Consider the statistic given by the quadratic form 
\begin{equation}
T_{n'}=\ve{z}^H P_{n'} \ve{z},
\label{eq:GaussQF}
\end{equation}
where the distribution of $\ve{z}$ under $\Hip_i$ is $\CN(\ve{0},\Xi_{i,n'})$, $i=0,1$. Assume that $\Xi_{i,n'}$ and $P_{n'}$ are positive definite matrices and that $\tau_{n'}\in (\Ex_0(T_{n'}),\Ex_1(T_{n'}))$ is the threshold of the test. The false alarm and the miss error probabilities are, respectively, 
\begin{align}
P_\text{fa}^{n'} &= \left(\tfrac{1}{\sqrt{2\pi s_0^2 \tr\{[(\Xi_{0,n'}P_{n'})^{-1}-s_0 I_{n'}]^{-2}\}}} + \mathcal{O}\left(\tfrac{1}{\sqrt{{n'}}}\right)\right) \times\nonumber\\ 
&\times e^{- s_0 \,\tr\{[(\Xi_{0,n'}P_{n'})^{-1}-s_0 I_{n'}]^{-1}\} - \log\det\left(I_{n'} - s_0 \Xi_{0,{n'}}P_{n'} \right)}, \nonumber\\
P_\text{m}^{n'} &= \left(\tfrac{1}{\sqrt{2\pi s_1^2 \tr\{[(\Xi_{1,n'}P_{n'})^{-1}-s_1 I_{n'}]^{-2}\}}} + \mathcal{O}\left(\tfrac{1}{\sqrt{n'}}\right)\right) \times\nonumber\\ 
&\times e^{- s_1 \,\tr\{[(\Xi_{1,n'}P_{n'})^{-1}-s_1 I_{n'}]^{-1}\} - \log\det\left(I_{n'} - s_1 \Xi_{1,n'}P_{n'} \right)}, \nonumber
\end{align}
where $s_0>0$ and $s_1<0$ are the solution to $\tau_{n'} = \tr\{[(\Xi_{0,n'}P_{n'})^{-1}-s_0 I_{n'}]^{-1}\}$ and $\tau_{n'} = \tr\{[(\Xi_{1,n'}P_{n'})^{-1}-s_1 I_{n'}]^{-1}\}$, respectively, and make $I_{n'} - s_i \Xi_{i,{n'}}P_{n'}$ positive definite for $i=0,1$.
\end{corollary}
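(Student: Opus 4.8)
The plan is to reduce the Gaussian quadratic form to a sum of independent (non-identically distributed) random variables and then invoke Theorem~\ref{thm:Pe} with $n$ replaced by $n'$. First I would whiten and diagonalize: under $\Hip_i$ write $\ve z = \Xi_{i,n'}^{1/2}\ve u$ with $\ve u\sim\CN(\ve 0,I_{n'})$, so $T_{n'}=\ve u^H\Xi_{i,n'}^{1/2}P_{n'}\Xi_{i,n'}^{1/2}\ve u$. Since $\Xi_{i,n'}$ and $P_{n'}$ are positive definite, the Hermitian matrix $\Xi_{i,n'}^{1/2}P_{n'}\Xi_{i,n'}^{1/2}$ admits a spectral decomposition $U_i\Lambda_i U_i^H$ with $\Lambda_i=\diag(\lambda_{i,1},\dots,\lambda_{i,n'})$ and all $\lambda_{i,k}>0$. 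Because $U_i^H\ve u$ is again $\CN(\ve 0,I_{n'})$, one gets $T_{n'}=\sum_{k=1}^{n'}\lambda_{i,k}|w_k|^2$ with $w_k$ i.i.d.\ $\CN(0,1)$, hence $|w_k|^2$ i.i.d.\ unit-mean exponential. Thus under $\Hip_i$ the statistic is exactly a sum $y_1+\dots+y_{n'}$ of independent variables $y_k=\lambda_{i,k}|w_k|^2$, all of whose moments are finite, so the hypotheses of Theorem~\ref{thm:Pe} are satisfied.

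Next I would compute the LMGF and its derivatives in closed form. Each $y_k$ has $\mu_{y_k}(s)=-\log(1-s\lambda_{i,k})$ for $s<\lambda_{i,k}^{-1}$, so $\mu_{T_{n'}}(s)=-\sum_k\log(1-s\lambda_{i,k})=-\log\det(I_{n'}-s\Lambda_i)$; since $\Lambda_i$ is similar to $\Xi_{i,n'}P_{n'}$ and $\det(I-AB)=\det(I-BA)$, this equals $-\log\det(I_{n'}-s\,\Xi_{i,n'}P_{n'})$, the term appearing in the exponent of the corollary. Differentiating, $\dot\mu_{T_{n'}}(s)=\sum_k\lambda_{i,k}/(1-s\lambda_{i,k})=\tr\{[\Lambda_i^{-1}-sI_{n'}]^{-1}\}$ and $\ddot\mu_{T_{n'}}(s)=\sum_k\lambda_{i,k}^2/(1-s\lambda_{i,k})^2=\tr\{[\Lambda_i^{-1}-sI_{n'}]^{-2}\}$. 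Because $\Lambda_i^{-1}$ is similar to $(\Xi_{i,n'}P_{n'})^{-1}$ (conjugate by $\Xi_{i,n'}^{1/2}$) and the trace is similarity-invariant, these become $\tr\{[(\Xi_{i,n'}P_{n'})^{-1}-sI_{n'}]^{-1}\}$ and $\tr\{[(\Xi_{i,n'}P_{n'})^{-1}-sI_{n'}]^{-2}\}$. Substituting $i=0$, $s=s_0$ into~(\ref{eq:generalPfa}) and $i=1$, $s=s_1$ into~(\ref{eq:generalPm}) then produces the stated expressions for $P_\text{fa}^{n'}$ and $P_\text{m}^{n'}$, and the defining equations $\tau_{n'}=\dot\mu_{T_{n'}}(s_0)$, $\tau_{n'}=\dot\mu_{T_{n'}}(s_1)$ take exactly the trace form in the corollary.

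Finally I would check the remaining conditions. Since $\tau_{n'}\in(\Ex_0(T_{n'}),\Ex_1(T_{n'}))$ and $\Ex_i(T_{n'})=\dot\mu_{T_{n'}}(0)$ (evaluated under $\Hip_i$), we have $\tau_{n'}>\Ex_0(T_{n'})$ and $\tau_{n'}<\Ex_1(T_{n'})$, so the two branches of Theorem~\ref{thm:Pe} apply. On $(0,\lambda_{0,\max}^{-1})$ the map $s\mapsto\dot\mu_{T_{n'}}(s)$ is strictly increasing from $\Ex_0(T_{n'})$ to $+\infty$, yielding a unique $s_0>0$ with $\tau_{n'}=\dot\mu_{T_{n'}}(s_0)$; the inequality $s_0<\lambda_{0,\max}^{-1}$ is precisely the statement that $I_{n'}-s_0\Xi_{0,n'}P_{n'}$ is positive definite, which also makes the $\log\det$ and trace terms well defined. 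The case $s_1<0$ under $\Hip_1$ is symmetric, and there $I_{n'}-s_1\Xi_{1,n'}P_{n'}$ is positive definite automatically because $s_1<0$.

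The only laborious part is the linear-algebra bookkeeping: tracking Hermitian square roots, repeatedly using $\det(I-AB)=\det(I-BA)$ and similarity-invariance of the trace to turn eigenvalue sums into expressions in $\Xi_{i,n'}$ and $P_{n'}$, and identifying the positive-definiteness constraint with $s<\lambda_{\max}^{-1}$. There is no genuine analytic obstacle beyond what Theorem~\ref{thm:Pe} already supplies; the corollary is, in essence, a change of variables together with the observation that a complex Gaussian quadratic form is a weighted sum of independent exponential random variables.
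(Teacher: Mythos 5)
Your proposal is correct and follows essentially the same route as the paper's proof: express the quadratic form as a weighted sum of independent exponential random variables via whitening and diagonalization, compute the LMGF $-\log\det(I_{n'}-s\,\Xi_{i,n'}P_{n'})$ and its trace-form derivatives, and substitute into Theorem~\ref{thm:Pe}. The paper states these steps very tersely; you supply the same argument with the linear-algebra details and the verification of the sign and positive-definiteness conditions made explicit.
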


\begin{proof}
See App. \ref{app:EPA-B}.
\end{proof}
\section{Performance Analysis}
\label{sec:performance}

In Section \ref{sec:equicorr}, we consider a case of study where the process $\{s_k\}$ is assumed to be equicorrelated, a particular case of a stationary process. It properly models the spatial correlation between the nodes and allows to obtain closed forms expressions of the error probabilities. 
In Section \ref{sec:randomNetwork}, we will consider that the nodes are spatially distributed as a PPP.  
We will see that similar conclusions to those obtained for the equicorrelated process can be drawn for non-stationary process.
Before continuing, let $\snr_\mathrm{M} = \frac{\sigma^2_s}{\sigma^2_v}$ and $\snr_\mathrm{C} = \frac{\bar{E}}{\sigma^2_w}$ be the signal to noise ratio of the measurements and of the communication channel, respectively.

\subsection{A case of study: the equicorrelated process}
\label{sec:equicorr}
Consider a stationary zero-mean circular-symmetric Gaussian equicorrelated process with  ACF $R_s(k) = \sigma_s^2(1-\rho) \delta_{k0} + \sigma_s^2\rho$, where $\delta_{k0}$ is the delta of Kronecker. 
The PSD of the equicorrelated process is $\phi(\nu)= \sigma_s^2(1-\rho) + \sigma_s^2\rho \delta(\nu)$, where $\delta(\nu)$ is the Dirac delta distribution and $\rho$ is a scalar parameter that controls the correlation, $0 \leq\rho<1$.
The covariance matrix for the measurements of each cluster is given by $\Sigma_{s,l}=\sigma_s^2(1-\rho) I_l + \sigma_s^2\rho \ve 1_l$ where $\ve 1_l$ is a $l\times l$ matrix of ones. 

As the equicorrelated process is stationary, both PFS-MAC or AFS-PAC strategies can be used. However, in this section we will focus on the first one. 
We assume that all the clusters are identical and have the same parameters in Def \ref{def:PFS-MAC}: $l_i=l$, $l_{i}'=l'$, $\beta_i=\beta$, $d_i=d_\text{max}(l)$ and $\gamma_i=\gamma_0$ $\forall i$.
Considering that the CH could be in any position inside the coverage area $l A_0$ of its corresponding cluster, and that each cluster is a square region, the maximum possible distance between a node and its corresponding CH is $d_\text{max}(l) =\sqrt{2 l A_0}$.

Let $\snr_\text{PFS-MAC}(l)=\frac{{\gamma_0}^2\sigma^2_s}{{\gamma_0}^2\sigma^2_v+\sigma_w^2}$ be the effective signal-to-noise ratio of the measurements available in each CH. If we replace (\ref{eq:Prec-PFS-MAC}) in the previous definition, we have
\begin{equation}
\snr_\text{PFS-MAC}(l) = \frac{\snr_\mathrm{M}\snr_\mathrm{C}}{\snr_\mathrm{C} + (1+\snr_\mathrm{M}) (2 l A_0)^\frac{\epsilon}{2}},
\label{eq:SNR-PFS-MAC}
\end{equation}
where we have emphasized that the effective signal-to-noise ratio is a function of $l$. 
 
In order to evaluate the error probabilities of the LCS, FCS, and ED schemes, we need to compute the LMGFs of each statistic under $\Hip_i$ and their first and second derivatives to then apply Th. \ref{thm:Pe}. For this particular process, it is possible to obtain closed forms solutions for the error probabilities for any threshold $\tau_n$. 
However, these expressions are lengthy and are not shown here. The symbolic expressions are shown in the MATLAB script \emph{SymExprForPfaPm.m} available with this paper.
 
Therefore, we make focus on the case when the threshold for each strategy is set such that the miss error exponent is zero, where simple expressions are obtained. 
We summarize the results in the following lemma.

\begin{lemma}[Probabilities of false alarm for the equicorrelated process using the PFS-MAC strategy]
\label{lem:FAP}
Consider the following thresholds such that the miss error exponents are zero for the three strategies: $\tau_\text{LCS}=\Ex_1(T_\text{LCS}(\ve{z}))$, $\tau_\text{FCS}=\Ex_1(T_\text{FCS}(\ve{z}))$ and $\tau_\text{ED}=\Ex_1(T_\text{ED}(\ve{z}))$. Then, the false alarm probabilities of the LCS, FCS and ED strategies using PFS-MAC are, respectively,  
\begin{align}
& P^{n}_\mathrm{fa,LCS}= \left(\tfrac{1}{\Gamma \sqrt{2\pi n\left[ (l+\beta -2)\rho^2 +2(1-\beta)\rho +\beta\right]} } + \mathcal{O}\left(\tfrac{1}{\sqrt{n}}\right)\right)\times \nonumber\\
& e^{ -n(\Gamma(\rho+\beta(1-\rho)) -\frac{1}{l}\log(1+ \Gamma(1+(l-1)\rho))  -(\beta-\frac{1}{l})\log(1+ \Gamma(1-\rho)))}
\label{eq:PfaEqui-LCS}\\
& P^{n}_\mathrm{fa,FCS}= \left(\tfrac{1}{\Gamma \sqrt{2\pi n\left[ (n+\beta -2)\rho^2 +2(1-\beta)\Gamma +\beta\right]} } + \mathcal{O}\left(\tfrac{1}{\sqrt{n}}\right)\right)\times \nonumber\\
& e^{ -n(\Gamma(\rho+\beta(1-\rho)) -\frac{1}{n}\log(1+ \Gamma(1+(n-1)\rho)) -(\beta-\frac{1}{n})\log(1+ \Gamma(1-\rho)))}
\label{eq:PfaEqui-FCS}\\
& P^{n}_\mathrm{fa,ED}= \left(\tfrac{1}{\Gamma \sqrt{2\pi n\beta}} + \mathcal{O}\left(\tfrac{1}{\sqrt{n}}\right)\right) e^{-n\beta(\Gamma -\log(1+ \Gamma))},
\label{eq:PfaEqui-ED}
\end{align}
where $\Gamma\defeq\snr_\text{PFS-MAC}(l)$.
\end{lemma}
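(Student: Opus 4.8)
The plan is to apply Corollary~\ref{cor:GQF} to each of the three statistics --- all Gaussian quadratic forms $T=\ve z^H P\ve z$, with $P=\Gamma_{n'}$ for LCS, $P=\Xi_{0,n'}^{-1}-\Xi_{1,n'}^{-1}$ for FCS and $P=I_{n'}$ for ED --- after exploiting the fact that for the equicorrelated process the PFS-MAC precoding puts every relevant covariance matrix in diagonal form. Indeed, $\Sigma_{s,l}=\sigma_s^2(1-\rho)I_l+\sigma_s^2\rho\,\ve 1_l$ has eigenvalue $\sigma_s^2(1+(l-1)\rho)$ along the all-ones direction and $\sigma_s^2(1-\rho)$ on its orthogonal complement; since the all-ones vector is a multiple of the DC column of the DFT matrix and the ON/OFF-EP rule of Definition~\ref{def:PFS-MAC} always retains the DC bin (the PSD mass of the equicorrelated process sits at $\nu=0$), after precoding $\Xi_{0,l_i'}=\xi_0 I_{l_i'}$ with $\xi_0=\gamma_0^2\sigma_v^2+\sigma_w^2$, and $\Xi_{1,l_i'}$ is diagonal with one entry $\xi_0(1+\Gamma(1+(l-1)\rho))$ and $l'-1$ entries $\xi_0(1+\Gamma(1-\rho))$, $\Gamma\defeq\snr_\text{PFS-MAC}(l)$. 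For $T_\text{FCS}$ the non--block-diagonal $\Xi_{1,n'}$ of (\ref{eq:Xi-np}) must also be diagonalized: carrying the block-DFT through the inter-cluster all-ones coupling shows that the $m$ DC channels form an $m\times m$ equicorrelated block with off-diagonal $\gamma_0^2\sigma_s^2\rho l$, which therefore contributes a single ``global'' eigenvalue $\xi_0(1+\Gamma(1+(n-1)\rho))$ together with $m-1$ eigenvalues $\xi_0(1+\Gamma(1-\rho))$, while the remaining $m(l'-1)$ channels keep $\xi_0(1+\Gamma(1-\rho))$. In every case $P\succ0$ and $\Xi_{1,n'}\succ\Xi_{0,n'}$, so Corollary~\ref{cor:GQF} applies; the matrix $I_{n'}-s_0\Xi_{0,n'}P$ stays positive definite at the saddlepoint $s_0$ found below (for the likelihood-ratio statistics, at $s_0=1$, it is $\Xi_0\Xi_1^{-1}$ up to the block decomposition).

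Writing $\mu_j(s)=\log\Ex_j(e^{sT})$ for the LMGF of $T$ under $\Hip_j$, the choice $\tau=\Ex_1(T)$ forces the miss saddlepoint $s_1=0$ (because $\dot{\mu}_1(0)=\Ex_1(T)$), i.e.\ a zero miss exponent --- the premise of the lemma --- so only the false-alarm half of Corollary~\ref{cor:GQF} is invoked. For $T_\text{FCS}$, which equals the log-likelihood ratio in $\ve z$ up to the additive constant discarded in (\ref{eq:FCS}), the Esscher relation gives $\dot{\mu}_0(1)=\dot{\mu}_1(0)=\tau$, and a direct computation from the eigenvalue list gives the same $\dot{\mu}_0(1)=\tau$ for $T_\text{LCS}$, so $s_0=1$ for both; for $T_\text{ED}$, where $\Xi_{0,n'}P\propto I_{n'}$ and hence $\mu_0(s)=-n'\log(1-s)$, solving $\dot{\mu}_0(s_0)=\tau$ yields $s_0=\Gamma/(1+\Gamma)$. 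Substituting $s_0$ into Corollary~\ref{cor:GQF}, the exponent is $-(s_0\tau-\mu_0(s_0))$, equal for LCS and FCS to $-(\tau-\mu_0(1))$ with $\tau=\Ex_1(T)=\sum_k(\lambda_k^{(1)}/\lambda_k^{(0)}-1)$ and $\mu_0(1)=\sum_k\log(\lambda_k^{(1)}/\lambda_k^{(0)})$ summed over the $n'$ channels; using the multiplicities above ($m=n/l$ DC channels and $m(l'-1)$ ordinary ones for LCS, one global channel and $n'-1$ ordinary ones for FCS) collapses $\tau$ to $n\Gamma(\rho+\beta(1-\rho))$ and $\mu_0(1)$ to $\frac{n}{l}\log(1+\Gamma(1+(l-1)\rho))+n(\beta-\frac1l)\log(1+\Gamma(1-\rho))$ (with $l$ replaced by $n$ for FCS), i.e.\ the exponents of (\ref{eq:PfaEqui-LCS})--(\ref{eq:PfaEqui-FCS}); for ED the single effective eigenvalue $1+\Gamma$ and $n'=n\beta$ give $-n\beta(\Gamma-\log(1+\Gamma))$. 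The pre-exponential factor comes from $s_0^2\ddot{\mu}_0(s_0)=\ddot{\mu}_0(1)=\sum_k(\lambda_k^{(1)}/\lambda_k^{(0)}-1)^2=\Gamma^2 n[(l+\beta-2)\rho^2+2(1-\beta)\rho+\beta]$ for LCS (with $l\to n$ for FCS, and $\Gamma^2 n\beta$ for ED), giving $1/(\Gamma\sqrt{2\pi n[(l+\beta-2)\rho^2+2(1-\beta)\rho+\beta]})$ and its analogues, while the $\mathcal{O}(1/\sqrt{n})$ remainder is carried over verbatim from Theorem~\ref{thm:Pe}, whose moment hypotheses are trivially met.

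The conceptual skeleton is short; the effort is algebraic. The one genuinely non-routine step is the reduction of the full $\Xi_{1,n'}$ for the FCS statistic --- carrying the per-cluster DFT through the inter-cluster all-ones coupling and recognizing the resulting $m\times m$ DC block as again equicorrelated, which is precisely what makes the ``global'' eigenvalue $1+\Gamma(1+(n-1)\rho)$ appear (a Schur-complement or matrix-inversion-lemma manipulation does the same job). Everything afterwards is bookkeeping: propagating the multiplicities ($1$ versus $l'-1$ per cluster, hence $m$ versus $m(l'-1)$, or $1$ versus $n'-1$) through the trace, log-determinant and second-derivative sums until the compact closed forms in $l$, $\beta$ and $n$ emerge, together with checking the positive-definiteness conditions of Corollary~\ref{cor:GQF} at the chosen $s_0$.
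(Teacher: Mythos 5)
Your proposal is correct and follows essentially the same route as the paper: apply Theorem~\ref{thm:Pe} (via Corollary~\ref{cor:GQF}) after diagonalizing the post-precoding covariances through the DFT, identify the saddle points $s_0=1$ (LCS, FCS) and $s_0=\Gamma/(1+\Gamma)$ (ED), and collapse the eigenvalue multiplicities into the stated closed forms. The only cosmetic difference is in the FCS step, where you diagonalize $\Xi_{1,n'}$ directly by recognizing the $m\times m$ equicorrelated DC block, whereas the paper obtains the same spectrum by computing $\mu_1(s)$ and invoking the LLR identity $\mu_0(s)=\mu_1(s-1)$.
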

\begin{proof}
See App. \ref{app:FACS}
\end{proof}

It is easy to show that $P^{n}_\mathrm{fa,LCS}$ decreases monotonically with $n$, $\beta$ and $\rho$. If $\snr_\text{PFS-MAC}(l)$ were independent of $l$, $P^n_\mathrm{fa,LCS}$  would also decrease monotonically with $l$ and $\snr_\text{PFS-MAC}(l)$. However, as it is seen in (\ref{eq:SNR-PFS-MAC}), $\snr_\text{PFS-MAC}(l)$ decreases with $l$, and therefore, a trade-off between these two parameters is established: for a given spatial density of sensors, to increase $l$, the number of nodes on each cluster, permits to successfully exploit the correlation, although this produces a worse $\snr_\text{PFS-MAC}(l)$ at each cluster head. If the cluster size is reduced, $\snr_\text{PFS-MAC}(l)$ improves but the correlation is not fully exploited. 




In Fig. \ref{fig:PfaVl}, the false alarm probability of the three schemes LCS, FCS and ED given by (\ref{eq:PfaEqui-LCS}), (\ref{eq:PfaEqui-FCS}) and (\ref{eq:PfaEqui-ED}) respectively, are plotted against $l$, for the parameters shown in the caption. 
\begin{figure}[tbh]
\centering
\includegraphics[width=1\linewidth]{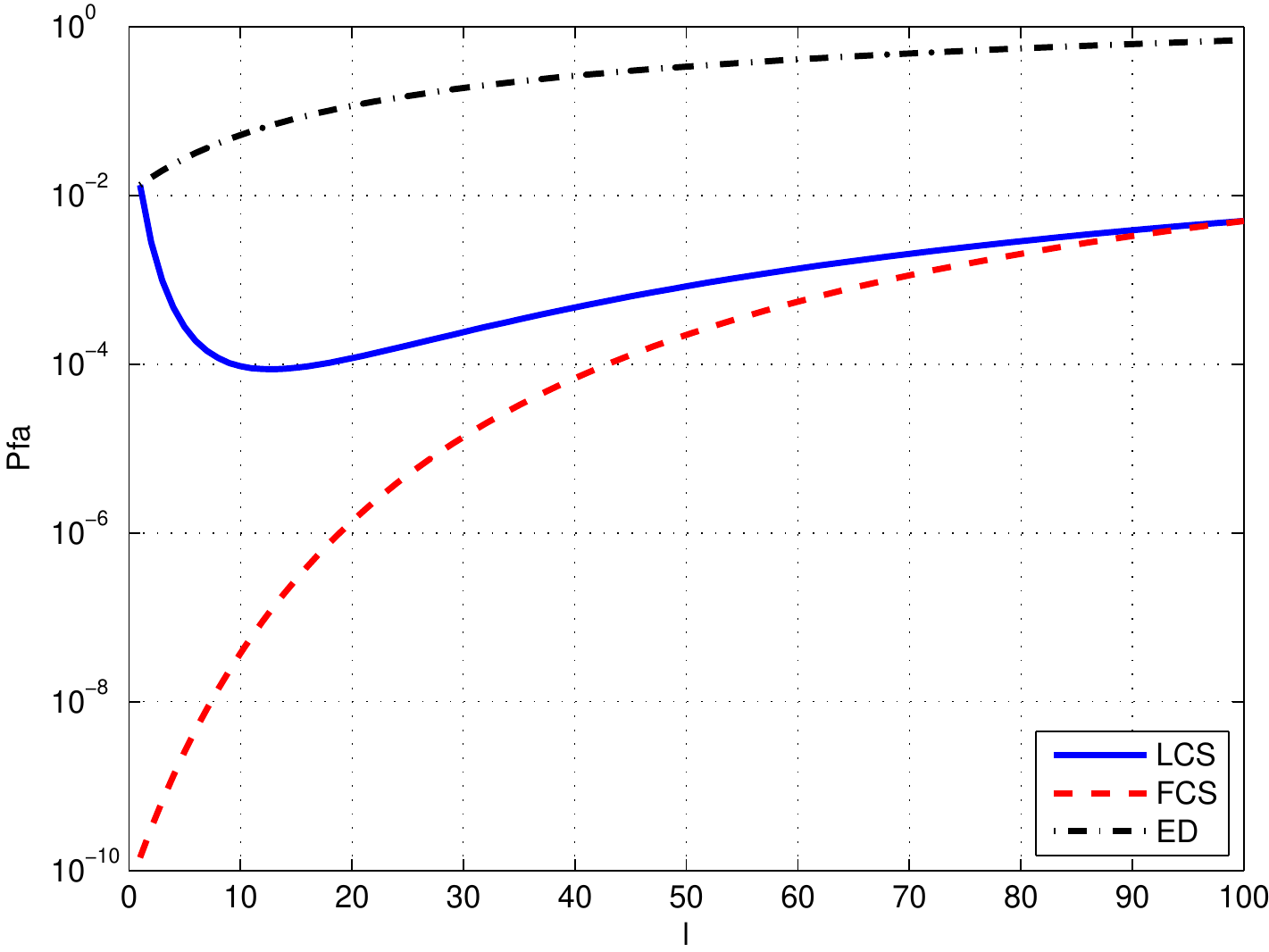}
\caption{Trade-off between exploiting correlation and improving the measurement SNR. Parameters: $\snr_\text{C}=12$ dB, $\snr_\text{M}=-6$ dB, $\rho=0.9$, $\epsilon = 2$, $\beta =1$ and $n=100$.}
\label{fig:PfaVl}
\end{figure}
We observe that for the LCS scheme, to increase the size of the cluster allows to exploit the correlation of the process and decrease $P_\text{fa,LCS}^n$. Beyond a certain cluster size, say $l=10$, the benefit of exploiting the correlation is overcome by the lost of $\snr_\text{PFS-MAC}(l)$, therefore, causing a worse performance. In the case of ED, increasing the cluster size reduces the effective $\snr_\text{PFS-MAC}(l)$ without exploiting the correlation, due to the nature of this scheme. On the other hand, as the FCS scheme has access to the whole vector of measurements $\ve z$, and computes the optimum LLR detector, it is only limited by $\snr_\text{PFS-MAC}(l)$, which decreases when $l$ increases. For a fixed cluster size, the FCS performance allows to quantify how much  one can gain potentially by exploiting the correlation of the sensed process. For example, if $l=10$, one could expect to decrease $P_\text{fa,LCS}^n$ between 3 and 4 orders of magnitude, although this is not entirely true given that the FCS scheme permits the FC to have access to the full vector $\ve z$ neither being penalized in bandwidth nor in transmitted energy, i.e., it could be a loose lower bound for some $l$'s. Additionally, in the case of the PFS-MAC scheme, the energy received in the CHs is limited by the furthest node, which could impose severe energy limitations. 
Notice that by definition of the LCS strategy, it coincides with the ED strategy when $l=1$ (no correlation is considered in the statistic) and with FCS strategy when $l=n$ (the network has a unique cluster).

In Fig. \ref{fig:PfaVsnrC}, we plot the false alarm probability of the three statistics in Lem. \ref{lem:FAP} against $\snr_\text{C}$ with $l$ as a parameter. In Fig. \ref{fig:PfaVsnrC_LCS} we plot only the LCS strategy for the sake of clarity. Two regimes can be observed there: i) an energy-limited regime for which $\snr_\text{C}$ limits the performance and, ii) a correlation-limited regime, where the performance is limited due to fact that the inter-cluster correlation is not considered. When the cluster size is small, e.g. low number of sensors per cluster $l$, the sensors are near to the CH and the communication between them is reliable. We see in Fig. \ref{fig:PfaVsnrC_LCS} that a WSN with small clusters performs better when the sensor energy budget is tight or the path loss effect is severe.  However, when the sensor energy budget and the path loss effect are not an issue, large clusters are preferred because they can exploit successfully the intra-cluster correlation, obtaining large correlation gains even for weakly correlated processes ($\rho = 0.2$), as indicated in the figure. 
\begin{figure}[bth]
\centering
\subfigure[LCS strategy. \label{fig:PfaVsnrC_LCS}]{\includegraphics[width=1\linewidth] {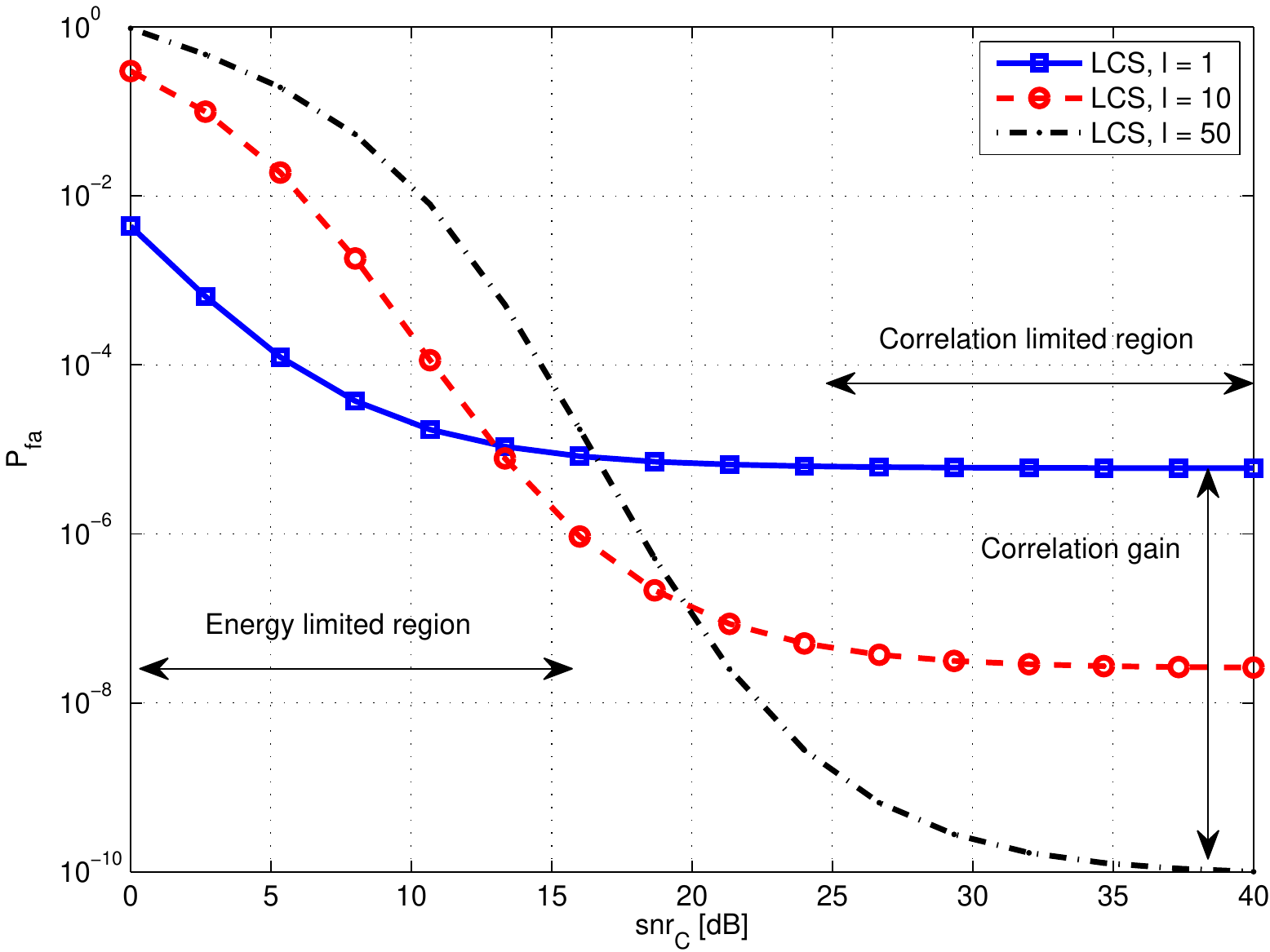}}
\subfigure[FCS and ED strategies. \label{fig:PfaVsnrC_FCS_ED}]{\includegraphics[width=1\linewidth] {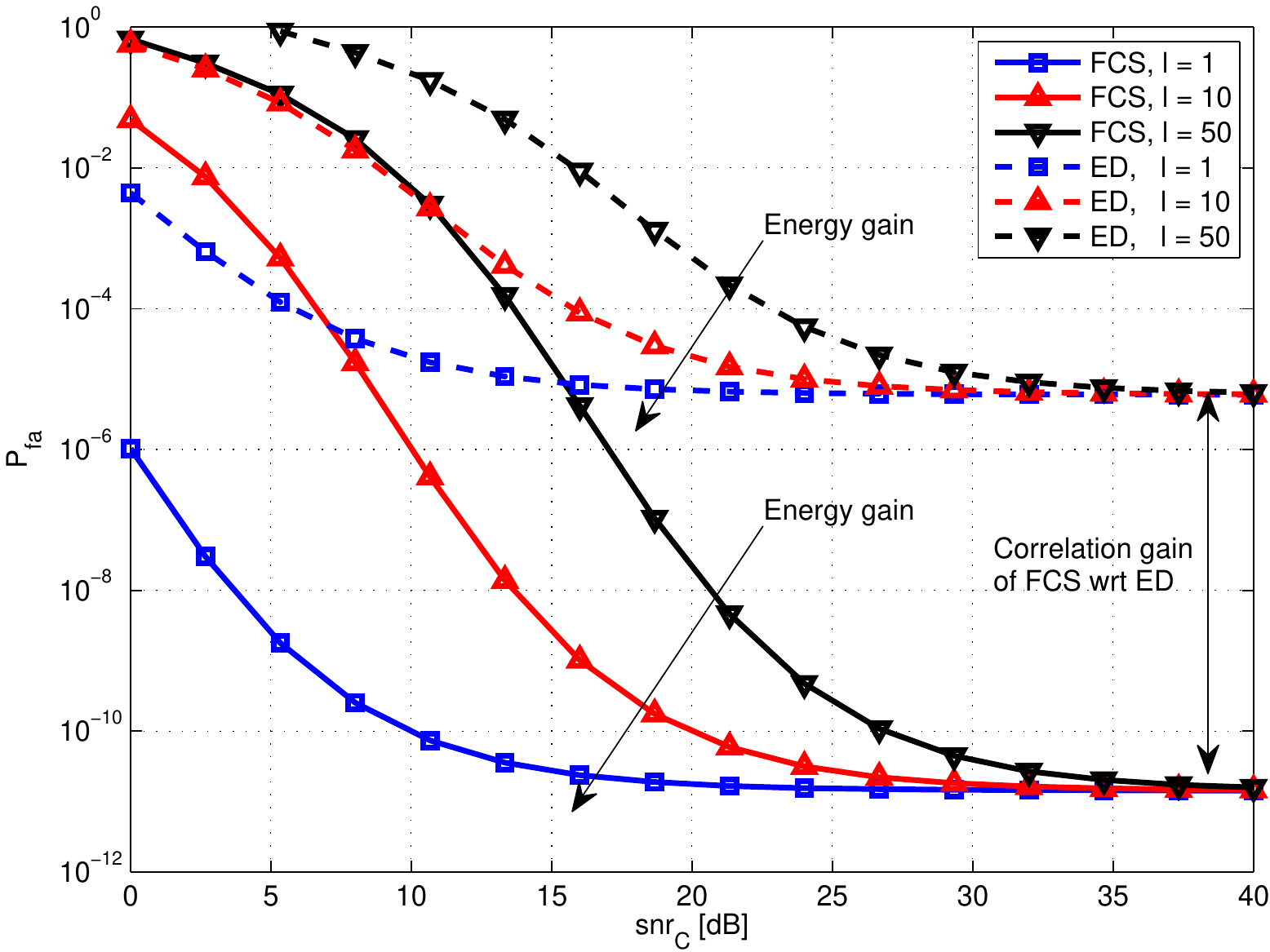}}
\caption{Probability of false alarm versus communication signal to noise ratio with parameters $\snr_\text{M}=-3$ dB, $\rho=0.2$, $\epsilon =2$, $\beta = 1$ and $n=100$.}
\label{fig:PfaVsnrC}
\end{figure}
In Fig. \ref{fig:PfaVsnrC_FCS_ED}, we show the behavior of the FCS and the ED strategies. We see that both strategies have lesser false alarm probabilities for small cluster sizes for the entire $\snr_\text{C}$ range. In other words, their performance is better due to an energy gain of the measurements in the CHs. Neither FCS nor ED modify how they consider the correlation of the process when the cluster size changes: the ED strategy always obviates the correlation while the FCS always considers the full correlation of the process (intra and inter-cluster). On the other hand, the proposed LCS makes the compromise between reliable measurement transmission to the CHs and exploiting the correlation of the process. For a given $l$, the FCS and the ED strategies provide a lower bound and an upper bound, respectively, of the error probability of the LCS strategy.


Similarly to Lem. \ref{lem:FAP}, we use Th. \ref{thm:Pe} to compute the miss-detection probability for the three statistics LCS, FCS and ED, when the threshold for each test $\tau_n$ is such that the false alarm error exponent is zero, i.e., $\tau_n$ is the mean of each statistic under $\Hip_1$. The expressions are lengthy and are not shown here. We have plotted them in Fig. \ref{fig:PmVsnrC_LB}, against $\snr_\text{C}$ with $l$ as a parameter.
It is observed that, similarly to the probability of false alarm, two regimes are observed for the miss-detection probability: a energy-limited regime and an correlation-limited regime. For a low $\snr_\text{C}$, the miss-detection probability is limited by $\snr_\text{PFS-MAC}(l)$. Therefore, a lower $P_\text{m}$ is obtained when the cluster size is small (low $l$). For a high $\snr_\text{C}$, the performance is limited by how much the correlation is exploited. As $\snr_\text{PFS-MAC}(l)$ is not an issue, the LCS can increase the cluster size in order to exploit the intra-cluster correlation and decrease $P_\text{m}$.  
\begin{figure}[tb]
\centering
\includegraphics[width=1\linewidth]{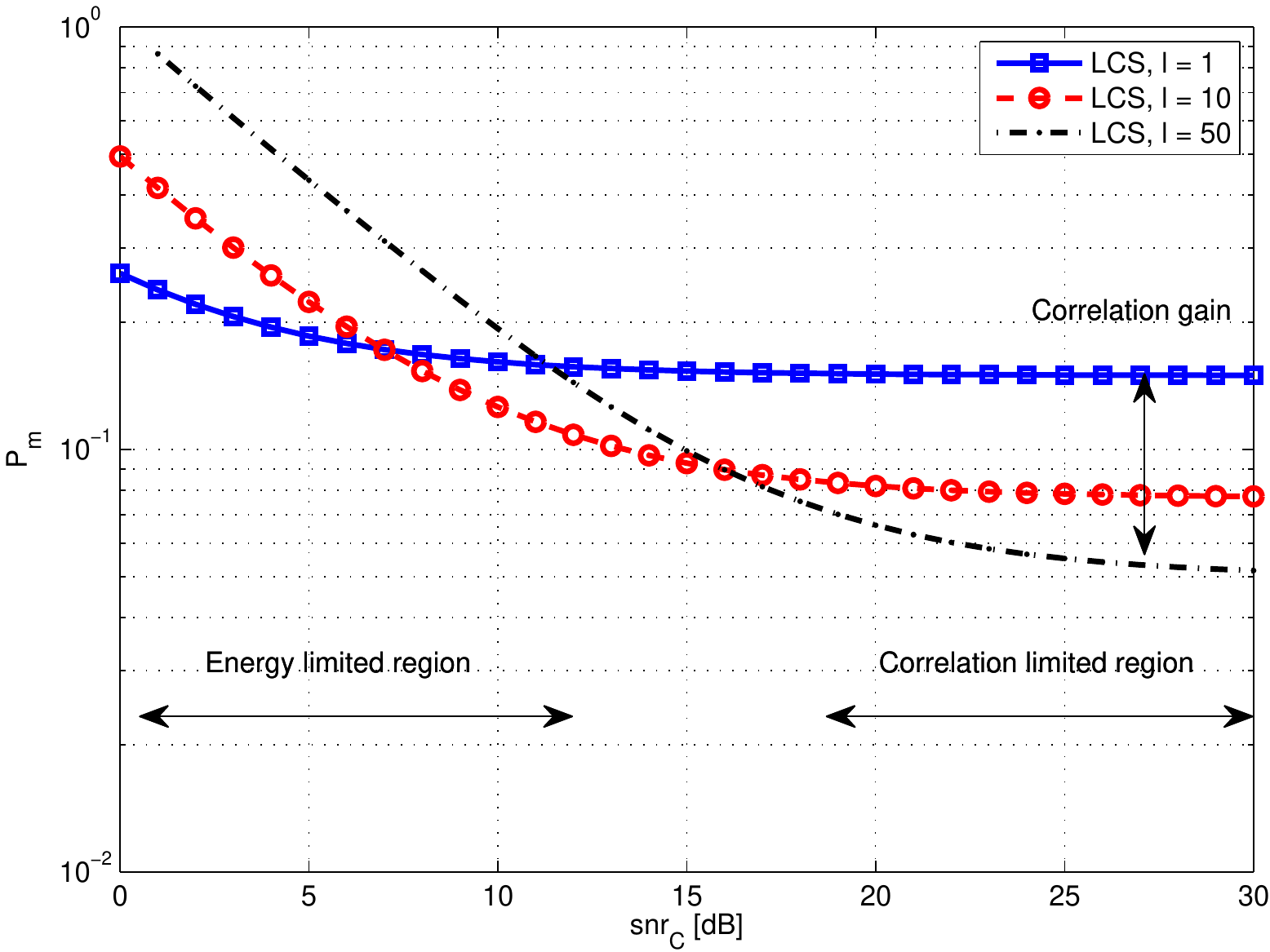}
\caption{$P_\text{m}$ versus $\snr_\text{C}$. $\snr_\text{M}=0$ dB, $n=100$ and $\rho=0.9$.}
\label{fig:PmVsnrC_LB}
\end{figure}

To conclude this section, we obtain the optimum size $l_\text{opt}$ which minimizes the false-alarm error probability of the LCS strategy given by (\ref{eq:PfaEqui-LCS}) among all cluster sizes $l\in [1:n]$ such that $n=ml$. This is done numerically given that there is not a closed-form expression for $l_\text{opt}$. In Fig. \ref{fig:loptVsnrC_psnrM}, we plot $l_\text{opt}$ versus $\snr_\text{C}$ with $\snr_\text{M}$ as a parameter. We see that the optimum cluster size increases with $\snr_\text{C}$ for each $\snr_\text{M}$. We also observe that for a fixed $\snr_\text{C}$, $l_\text{opt}$ is greater for lower $\snr_\text{M}$. That is, as $\snr_\text{M}$ decreases, it is better to exploit the intra-cluster correlation increasing $l$ than to increase $\snr_\text{PFS-MAC}(l)$. This is an important guideline for the case in which the WSN is composed by cheap nodes with limited sensing capabilities. In Fig. \ref{fig:loptVsnrC_pCorr}, we plot $l_\text{opt}$ versus $\snr_\text{C}$ with $\rho$ as a parameter. For a given $\snr_\text{C}$, we see that it is more profitable to increase the size of the cluster when the process is highly correlated than to increase $\snr_\text{PFS-MAC}(l)$.
\begin{figure}[htb]
\centering
\subfigure[$\snr_\text{M}$ as a parameter. \label{fig:loptVsnrC_psnrM}]{\includegraphics[width=1\linewidth]{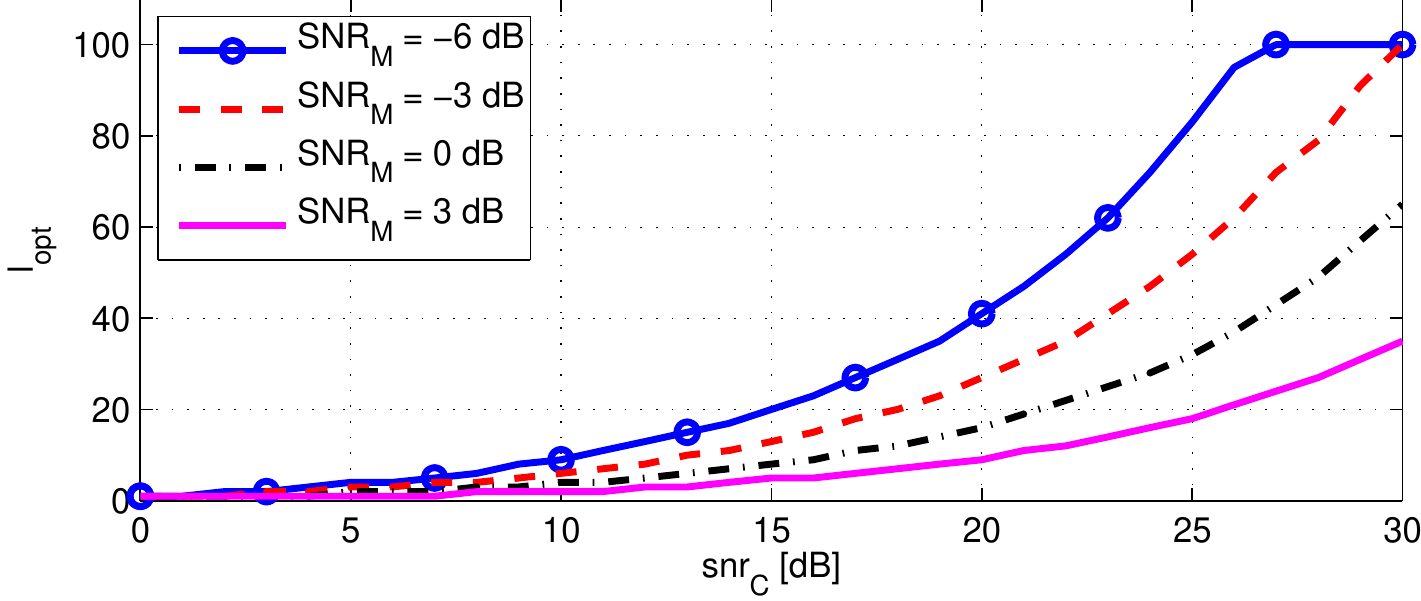}}
\subfigure[$\rho$ as a parameter. \label{fig:loptVsnrC_pCorr}] {\includegraphics[width=1\linewidth]{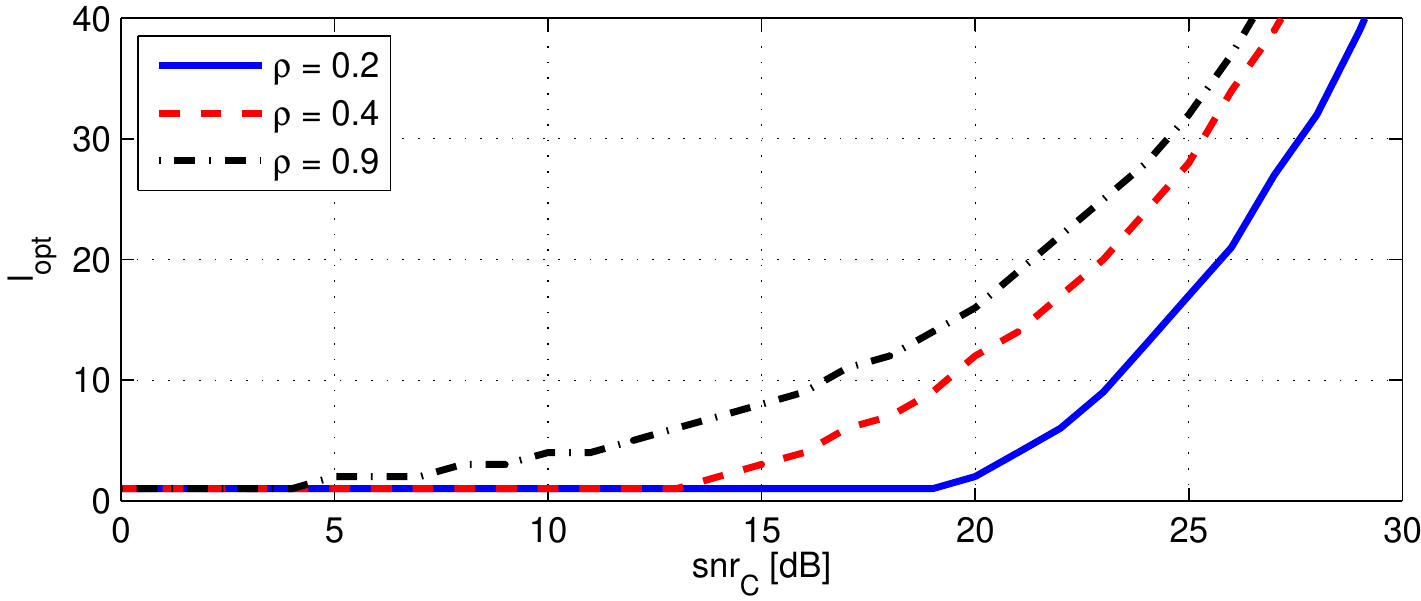}}
\caption{$l_\text{opt}$ versus $\snr_\text{C}$ with $\epsilon = 2$, $\beta=1$ and $n=100$.}
\label{fig:loptVsnrC}
\end{figure}

The optimum cluster size $l_\text{opt}$ allows to redefine the idea of spatial coherence regions (SCRs). Typically, a SCR is an inherent property of the process to be sensed, uniquely determined by the spatial correlation between its samples \cite{Sayeed2003DistClassifClusters}. In this work, guided by the error probability as an indicator of the ultimate performance, we obtain, that a SCR is not only defined by the spatial correlation of the process, but also by critical parameters as $\snr_\text{M}$, transmitted energy $\bar{E}$, path loss exponent $\epsilon$, and bandwidth $\beta$ available at each cluster.

\subsection{Random Network}
\label{sec:randomNetwork}
In this section, we compute the performance of the statistics when the sensors of the network  are spatially distributed as a  homogeneous Poisson Point Process \cite{stochastic_geometry2009} of intensity $\lambda>0$. In this case, the total number of sensors and the number of sensors per cluster follow a Poisson distribution with mean $\lambda n A_0$ and $\lambda l A_0$, respectively. For simplicity, we continue using the assumption that the clusters are squares of area $l A_0$ and the region to be sensed has a total area $n A_0$. We set $\lambda=1$ and $A_0=1$. Therefore, the average number of sensors per cluster coincides with $l$. We also assume that the spatial process is isotropic meaning that the ACF depends only on the distance $d$ between the points for which it is evaluated. Some examples of possible isotropic ACFs are the equicorrelated process presented previously, the exponential ACF $R_s(d)=\sigma_s^2 \rho^{d}$ or the hyperbolic ACF $R_s(d)=\sigma_s^2 \frac{\rho}{\rho + d}$. The exponential ACF was found to be a good model for the shadowing effect in a urban cellular environment \cite{Gudmundson1991} and it will be used here. 

The precoding strategy used for an arbitrary spatial distribution of the nodes is AFS-PAC whose gains are defined in (\ref{eq:Prec-AFS-PAC}). For this strategy, the signal-to-noise ratio for the $k$-th node at the $i$-th CH is defined as
$\snr_\text{AFS-PAC}(k)= \frac{c_{kk}^2 \sigma_s^2}{c_{kk}^2 \sigma_v^2 + \sigma_w^2/a_k^2}$. Replacing (\ref{eq:Prec-AFS-PAC}) in it gives
\begin{equation*}
\snr_\text{AFS-PAC}(k)= \frac{\snr_\text{C}\snr_\text{M}}{\snr_\text{C} + (1+\snr_\text{M})\|\ve x_k - \ve x_{\text{CH},i}\|^\epsilon}.
\end{equation*}
We compute the average miss error probability $\bar{P}_\text{m}$ as a metric of performance, defined by the average over the PPP $\Phi$, $\bar{P}_\text{m} = \Ex(P_\text{m}(\Phi))$ subject to $P_\text{fa}(\Phi)=\alpha$ for each realization of the PPP. This average is computed using the method of Monte Carlo and the theoretical expressions found in Corollary \ref{cor:GQF} (labeled LCS, FCS and ED in Figs. \ref{fig:PmVsl_Rnd} and \ref{fig:PmVsSNRc_Rnd}). On the other hand, we also generate the sensed process and the corresponding noises using again the method of Monte Carlo to validate the theoretical expressions derived in Corollary \ref{cor:GQF} (labeled LCS-MC, FCS-MC and ED-MC in Figs. \ref{fig:PmVsl_Rnd} and \ref{fig:PmVsSNRc_Rnd}). 

In Fig. \ref{fig:PmVsl_Rnd} we plot the miss probability as a function of the average number of nodes per cluster $l$ with the parameters indicated in its caption. Notice that the results for a random network with exponential ACF are similar to the equicorrelated process case. Considering the LCS strategy, in spite of decreasing the effective signal-to-noise ratio at each CH $\snr_\text{AFS-PAC}(k)$ for most of the sensors when $l$ increases up to 10, the miss error probability decreases because of the intra-cluster correlation is better exploited. If the cluster size increases further, the path loss effect inside each cluster dominates and the performance is deteriorated.
The LCS strategy is close to the bound provided by the FCS scheme and the gain compared with the ED is considerable.
\begin{figure}[bth]
\centering
\includegraphics[width=1\linewidth]{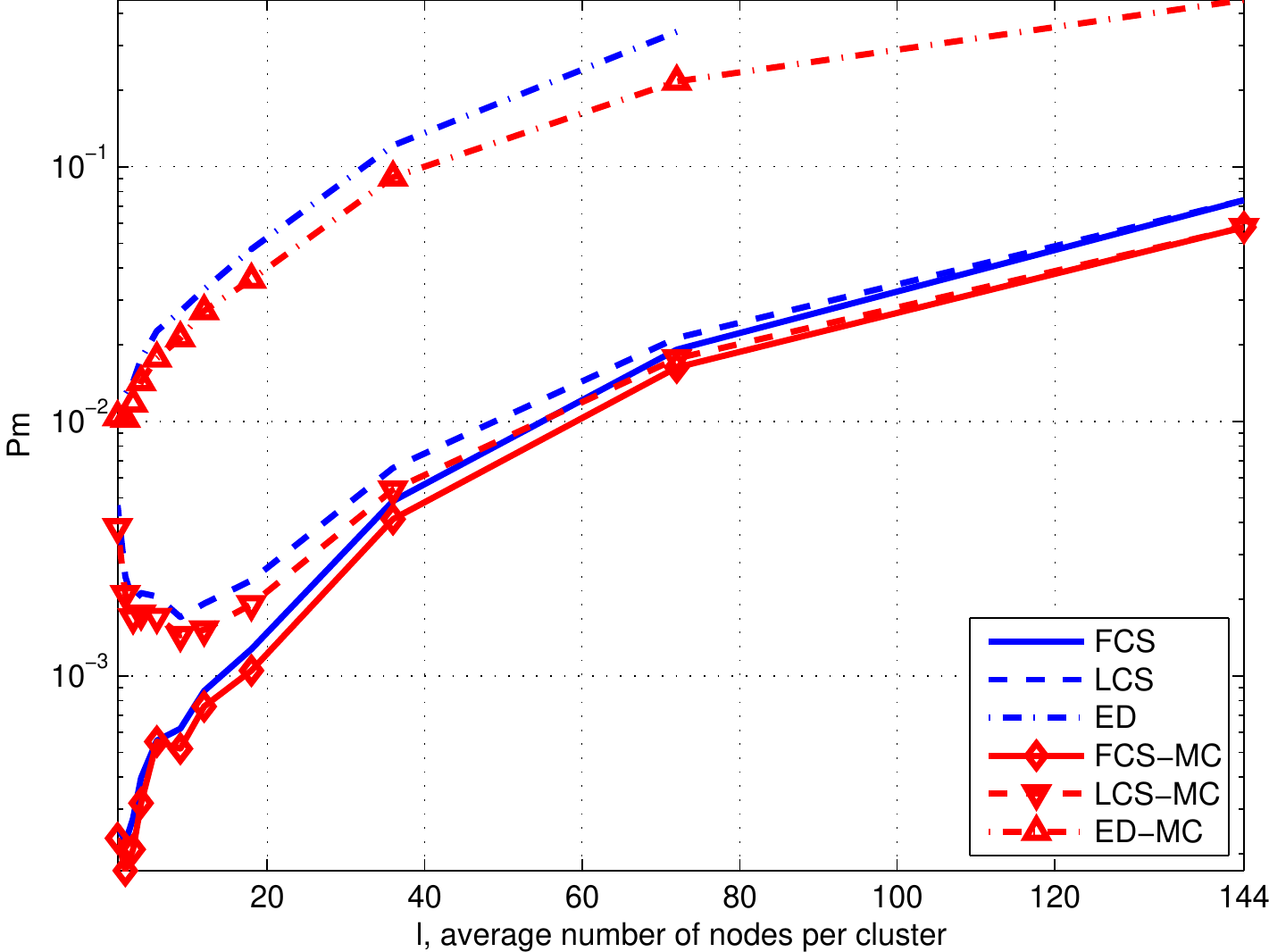}
\caption{$\bar{P}_\text{m}$ versus $l$ for an exponential ACF with $\rho=10$, $n= 144$, $P_\text{fa}= 10^{-2}$, $\snr_\text{M}=0$ dB and $\snr_\text{C}= 12$ dB.}\label{fig:PmVsl_Rnd}
\end{figure}

In Fig. \ref{fig:PmVsSNRc_Rnd} we plot the miss error probability as a function of $\snr_\text{C}$ with the parameters indicated in its caption. Notice again that the results are similar to the closed-form expressions obtained for the equicorrelated process. The LCS strategy is close to the bound provided by the FCS scheme and the gain respect to the ED strategy is considerable for the entire shown range of $\snr_\text{C}$. These Monte Carlo curves also validate the results of Th. \ref{thm:Pe}, where the central limit theorem was applied to compute the error probabilities, obtaining good approximations even for a low number of sensors $n$.   
\begin{figure}[htb]
\centering
\includegraphics[width=1\linewidth]{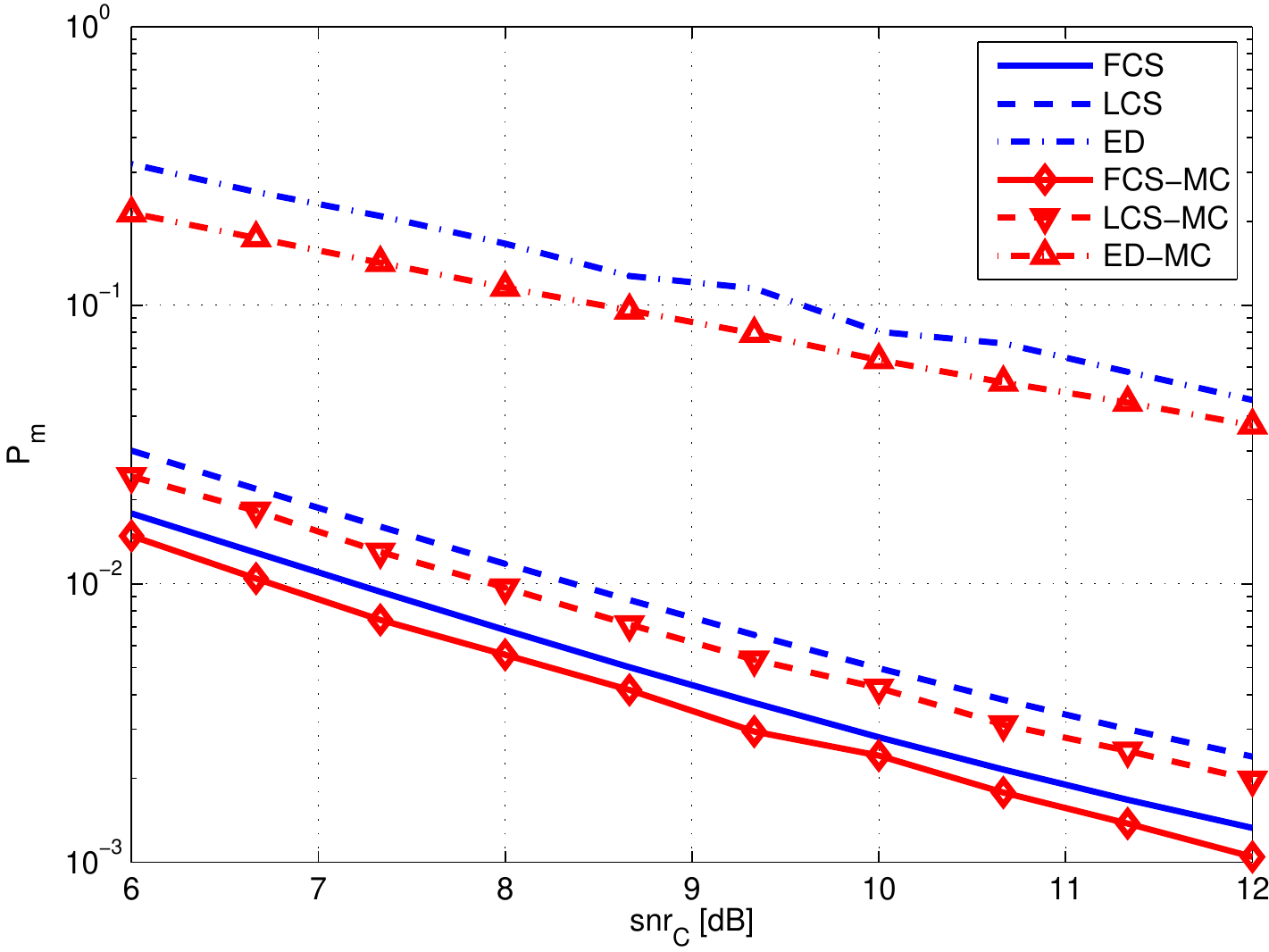}
\caption{$P_\text{m}^n$ vs $\snr_\text{C}$ for an exponential ACF with $\rho=10$, $n= 144$, $l=18$, $m=8$, $P_\text{fa}= 10^{-2}$, and $\snr_\text{M}=0$ dB.}
\label{fig:PmVsSNRc_Rnd}
\end{figure}

\section{Conclusions}
\label{sec:conclusions}
In this paper, we developed a model and appropriate tools for evaluating the performance of a distributed detection scheme that makes a trade-off between transmission energy and exploitation of the spatial correlation of the process. We characterized this trade-off and found two regimes of operation: an energy limited regime, and a correlation-limited regime. We showed that the proposed LCS strategy allows to obtain excellent performances, which are close to the ones obtained with the use the global network-wide correlation information. The use of the spatial correlation is of great importance in  WSNs where the captured measurements  have low signal-to-noise ratio and also when the process to be sensed presents moderate to high correlation.

\appendices

\section{Error Probability Approximations}
In this appendix we derive an approximation for both the false alarm probability $P_\text{fa}^n$ and the miss-detection probability $P_\text{m}^n$ following a line reasoning similar to \cite{Gallager1968IT} and using a version of the central limit theorem (CLT) for a sum of independent but not identically distributed random variables given by the following lemma. 
\begin{lemma}[Essen CLT]
\label{lem:CLT}
Let $x_k$ be mutually independent random variables such that 
\begin{equation*}
\Ex(x_k)=0, \qquad \Ex(x_k^2)=\sigma_k^2, \qquad \Ex(|x_k|^3)=\rho_k.
\end{equation*}
Let $s_n^2 = \sigma^2_1 + \dots +\sigma^2_n$, $r_n = \rho_1 + \dots + \rho_n$ and denote by $F_n$ the cumulate distribution function (CDF) of $(x_1 + \dots + x_n)/s_n$. Then, for all $u$ and $n$, 
$$|F_n(u)-\Phi(u)| \leq \frac{6 r_n}{s^3_n},$$
where $\Phi$ is the standard normal CDF.
\end{lemma}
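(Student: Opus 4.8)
The statement is exactly the Berry--Esseen theorem for independent, not necessarily identically distributed, summands (Esseen's inequality), and it enters the paper only as a black-box input to Theorem~\ref{thm:Pe}; so one legitimate route is simply to cite it from a standard reference (e.g., Feller, \emph{An Introduction to Probability Theory and Its Applications}, Vol.~II). If a self-contained proof is wanted, the plan is the classical Fourier-analytic argument. First I would invoke Esseen's smoothing inequality: writing $\hat F_n$ for the characteristic function of the normalized sum $(x_1+\dots+x_n)/s_n$ and $e^{-t^2/2}$ for that of $\Phi$, one has, for every $T>0$,
\[
\sup_u |F_n(u)-\Phi(u)| \;\le\; \frac{1}{\pi}\int_{-T}^{T}\left|\frac{\hat F_n(t)-e^{-t^2/2}}{t}\right|\,dt \;+\; \frac{c}{T},
\]
where $c$ is an absolute constant determined by $\sup_u \Phi'(u)=1/\sqrt{2\pi}$. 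This reduces the lemma to estimating the integrand on a frequency window $|t|\le T$, with $T$ to be chosen at the very end.

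Next I would bound $\hat F_n(t)-e^{-t^2/2}$. By independence and the normalization, $\hat F_n(t)=\prod_{k=1}^n \hat f_k(t/s_n)$ while $e^{-t^2/2}=\prod_{k=1}^n e^{-\sigma_k^2 t^2/(2 s_n^2)}$, where $\hat f_k$ is the characteristic function of $x_k$. Using $\Ex(x_k)=0$, $\Ex(x_k^2)=\sigma_k^2$ and $\Ex(|x_k|^3)=\rho_k$, a second-order Taylor expansion with integral remainder gives $|\hat f_k(t/s_n)-(1-\sigma_k^2 t^2/(2s_n^2))|\le \rho_k|t|^3/(6 s_n^3)$, and a similar elementary estimate handles $|e^{-\sigma_k^2 t^2/(2s_n^2)}-(1-\sigma_k^2 t^2/(2s_n^2))|$. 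Feeding both into the telescoping bound $|\prod_k a_k-\prod_k b_k|\le \sum_k|a_k-b_k|$ (legitimate since characteristic functions have modulus at most $1$), and restricting to a window $|t|\le T$ with $T$ a suitable constant multiple of $s_n^3/r_n$ so that the resulting sum of per-term bounds is controlled --- a point where Lyapunov's inequality $\sigma_k^3\le\rho_k$ is used to compare the second- and third-moment sums --- yields an estimate of the form $|\hat F_n(t)-e^{-t^2/2}|\le C\,\tfrac{r_n}{s_n^3}\,|t|^3\,e^{-t^2/4}$ on that window.

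Finally I would substitute this into the smoothing inequality: the integral term is then at most $\tfrac{C}{\pi}\tfrac{r_n}{s_n^3}\int_{\R}t^2 e^{-t^2/4}\,dt=\mathcal{O}(r_n/s_n^3)$, and choosing $T$ proportional to $s_n^3/r_n$ makes $c/T=\mathcal{O}(r_n/s_n^3)$ as well; in the complementary case $6 r_n/s_n^3\ge 1$ there is nothing to prove, since $|F_n(u)-\Phi(u)|\le 1$ always. The main obstacle is the quantitative bookkeeping: tracking every absolute constant so that their sum does not exceed the stated value $6$ (standard but not sharp), and being careful that the product estimate of the previous step is used only where each characteristic factor has modulus at most $1$, falling back on the trivial bound outside that window --- which is precisely why $T$ must be taken small enough. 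No ingredient specific to the present paper enters the argument.
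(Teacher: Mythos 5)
Your proposal is correct and matches the paper's approach: the paper proves this lemma purely by citation to Feller, Vol.~II (p.~544, Th.~2), which is exactly your first suggested route. The additional Fourier-analytic sketch via Esseen's smoothing inequality is the standard classical argument and is sound, but it goes beyond what the paper itself provides.
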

\begin{proof}
See \cite[p. 544, Th. 2]{Feller1966}
\end{proof}

\subsection{Proof of Theorem \ref{thm:Pe}}
\label{app:EPA-A}
We begin with the computation of $\prob(T_n>\tau_n)$ when $\tau_n>\Ex(T_n)$.  
Let $q_k$ be the PDF of $y_{k,s}$, the \emph{tilted} version of $y_k$ with PDF $p_k$, such that $q_k(y_{k,s}) = p_k(y_{k,s}) e^{s y_{k,s} -\mu_k(s)}$. It is straightforward to check that the first and second derivatives of $\mu_k(s)$ satisfy $\dot{\mu}_k(s)=\Ex(y_{k,s})$ and $\ddot{\mu}_k(s)=\Var(y_{k,s})$, respectively. Define $T_{n,s} = y_{1,s}+\dots+y_{n,s}$. The PDF of $T_{n,s}$ is 
\begin{align*}
Q_{T}(t_s) & =\int_{\mathcal{R}}\prod_{k=1}^n q_{k}(y_{k,s}) dy_{k,s}\\
 & =\int_{\mathcal{R}}\prod_{k=1}^n p_k(y_{k,s}) e^{s y_{k,s} -\mu_k(s)} dy_{k,s},
\end{align*}
where the integration domain $\mathcal{R}$ is defined as $\mathcal{R}=\{(y_1,\dots,y_n)\in\R^n: y_{1,s}+\dots+y_{n,s} = t_s\}$. Therefore,
\begin{align*}
Q_{T}(t_s) & =\left(\int_{\mathcal{R}}\prod_{k=1}^n p_k(y_{k,s}) dy_{k,s}\right) e^{s t_s -\sum_{k=1}^n\mu_k(s)} \\
& = P_T(t_s) e^{s t_s -\mu_T^n(s)},\\
\end{align*}
where $P_T$ is the PDF of $T_n$ and $\mu_T^n(s) = \sum_{k=1}^n\mu_k(s)$ is its LMGF. Thus, $T_{n,s}$ is the tilted version of $T_n$.
Then,
\begin{align}
\prob(T_n>\tau_n)\nonumber&= \int_{\tau_n}^{\infty} P_T(t) dt\nonumber\\
&= e^{-(s\tau_n - \mu_T^n(s))}\int_{\tau_n}^{\infty} Q_T(t) e^{-s(t-\tau_n)} dt \label{eq:step1}\\
&= e^{-(s\dot{\mu}_T^n(s) - \mu_T^n(s))}\int_{\dot{\mu}_T^n(s)}^{\infty} Q_T(t) e^{-s(t-\dot{\mu}_T^n(s))} dt. \label{eq:step2}
\end{align}
The first factor in (\ref{eq:step1}) is equal to the Chernoff bound if $s$ is constrained to be positive. 
Because the LMGF $\mu_T^n(s)$ is a convex function, $\dot{\mu}_T(s)$ is an nondecreasing function. By assumption, $\tau_n> \Ex(T_n)$. Thus, the value of $s$ that solves $\tau_n = \dot{\mu}_T^n(s)$ satisfies $s>0$ and we can use this equality to replace $\tau_n$ in (\ref{eq:step2}). 

The Chernoff error exponent defined as $\lim_{n\rightarrow\infty} \frac{1}{n}(s\dot{\mu}_T^n(s) - \mu_T^n(s))$ is asymptotically tight, i.e., it is equal to the error exponent $\lim_{n\rightarrow\infty}-\frac{1}{n}\log \prob(T_n>\tau_n)$. Therefore, we need to approximate the integral which is a sub-exponential factor. 
We let $F_T$ be the CDF of $Q_T$ and integrate by parts to obtain
\begin{align}
&\int_{\dot{\mu}_T^n(s)}^{\infty} Q_T(t) e^{-s(t-\dot{\mu}_T^n(s))} dt =\nonumber\\
&=\int_{\dot{\mu}_T^n(s)}^{\infty} s(F_T(t) - F_T\left(\dot{\mu}_T^n(s)\right) e^{-s(t-\dot{\mu}_T^n(s))} dt\nonumber\\
&=s\sqrt{ \ddot{\mu}_T^n(s)}\int_{0}^{\infty} (G_T(u) - G_T(0)) e^{-s \sqrt{ \ddot{\mu}_T^n(s)}\, u } du,\label{eq:QT}
\end{align}
where in the second equality we have defined the following zero-mean unit-variance random variable: $U_n=\frac{T_{n,s}- \dot{\mu}_T^n(s)}{ \sqrt{\ddot{\mu}_T^n(s)}} =\frac{\sum_k y_{k,s}-\dot{\mu}_k(s)}{ \sqrt{\ddot{\mu}_T^n(s)}}$ with CDF given by $G_T$. Applying Lem. \ref{lem:CLT} with $x_k\defeq y_{k,s}-\dot{\mu}_k(s)$, it can be shown that $s_n^2=\ddot{\mu}_T^n(s)=\mathcal{O}(n)$ and $r_n=\mathcal{O}(n)$. Therefore, $G_T(u) - G_T(0) = \Phi(u)-\Phi(0) + \mathcal{O}(\tfrac{1}{\sqrt{n}})$, where $\Phi$ is the standard Gaussian CDF.
The exponential factor in (\ref{eq:QT}) suggests that for sufficiently large $n$ we only need a good approximation of $\Phi(u)-\Phi(0)$ around 0, and only within a small fraction of the standard deviation of $T_n$, $\sqrt{ \ddot{\mu}_T^n(s)}$. 
Thus, it is sufficient to use the following bounds: $1-\tfrac{u^2}{2}\leq e^{-\frac{u^2}{2}}\leq 1$ which produces, if $u\geq 0$, $\frac{u}{\sqrt{2\pi}}-\frac{u^3}{6\sqrt{2\pi}} \leq \Phi(u)-\Phi(0)\leq \frac{u}{\sqrt{2\pi}}$. Then, it easy to show that
\begin{align*}
\int_{\dot{\mu}_T^n(s)}^{\infty} Q_T(t) e^{-s(t-\dot{\mu}_T^n(s))} dt = \frac{1}{\sqrt{2\pi s^2\ddot{\mu}_T^n(s))}} + \mathcal{O}\left(\frac{1}{\sqrt{n}}\right).
\end{align*}
Replacing this in (\ref{eq:step2}) produces (\ref{eq:generalPfa}), the first result of the theorem. The computation of $\prob(T_n<\tau_n)$ is similar and it is omitted.

\subsection{Proof of Corollary \ref{cor:GQF}}
\label{app:EPA-B}
In order to apply Thm. \ref{thm:Pe}, the Gaussian quadratic form (\ref{eq:GaussQF}) can be expressed as the sum of $n'$ independent variables.
Then, we only need to evaluate its LMGF $\mu_i(s)= \log\Ex_i\left(e^{sT_{n'}}\right)$ under $\Hip_i$, $i=0,1$, and their derivatives:
\begin{align}
\mu_i(s) & = -\log\det\left(I_n - s \Xi_{i,n'}P_{n'} \right),\label{eq:LMGF-GaussQF}\\ 
\dot{\mu}_i(s)&=\tr\{[(\Xi_{i,n'}P_{n'})^{-1}-sI_{n'}]^{-1}\},\label{eq:LMGF-GaussQF-dot}\\ 
\ddot{\mu}_i(s)&=\tr\{[(\Xi_{i,n'}P_{n'})^{-1}-sI_{n'}]^{-2}\}.\label{eq:LMGF-GaussQF-ddot}
\end{align}
Replacing these expressions in (\ref{eq:generalPfa}) and (\ref{eq:generalPm}) completes the proof. 

\section{Performance for the equicorrelated process: proof of Lemma \ref{lem:FAP}}
\label{app:FACS}
\subsection{$P_\text{fa}$ for the LCS strategy}
We use Th. 1 to compute $P_\text{fa,LCS}$. We first evaluate the LMGF of the statistic $T_\text{LCS}$ in (\ref{eq:LCS-2}) under $\Hip_0$ and its first and second derivatives given by (\ref{eq:LMGF-GaussQF})-(\ref{eq:LMGF-GaussQF-ddot}) in Corollary \ref{cor:GQF}, where $P_{n'}=\Gamma_{n'}$.

The covariance matrix of the equicorrelated process is a circulant matrix and, therefore, it is diagonalized by the $l\times l$ DFT matrix $F_{ll}$ and its eigenvalues are the DFT of its first column, i.e., $\Sigma_{s,l}=F_{ll} D_{s,l} F_{ll}^H$ where $D_{s,l}=\diag(1+(l-1)\rho,1-\rho,\dots,1-\rho)$.
Considering the PSD of the equicorrelated process, the frequency vectors used in the precoding matrix of the PFS-MAC scheme given in Def. \ref{def:PFS-MAC} have the indexes $j_{k'}=k'-1$, $k'=1,\dots,l'$. Then, the covariance matrix in (\ref{eq:cov_zlp1}) is
\begin{equation*}
\Xi_{1,l'}= \gamma_0^2 D_{s,l'} + (\gamma_0^2\sigma^2_v + \sigma^2_w) I_{l'},
\end{equation*}
where $D_{s,l'}$ is a $l'\times l'$ diagonal matrix, $D_{s,l'}=\diag(1+(l-1)\rho,1-\rho,\dots,1-\rho)$. Notice that $\Xi_{1,l'}$ is a diagonal matrix because the covariance of the process is a circulant matrix and the precoding scheme is based on the DFT matrix.
The covariance matrices of $\ve z$ under $\Hip_0$ and $\Hip_1$ are given by (\ref{eq:Xi-np}) and they result  
\begin{align}
\Xi_{0,n'} &= (\gamma_0^2\sigma_v^2+ \sigma^2_w) I_{n'},\nonumber\\
\Xi_{1,n'} &=\gamma_0^2\sigma^2_s(1-\rho)I_{n'} + \gamma_0^2\sigma^2_s\rho (I_m\otimes F_{ll'}^H)\ve{1_l} (I_m\otimes F_{ll'}) \nonumber \\
&+ (\gamma_0^2\sigma^2_v + \sigma^2_w) I_{n'}, \nonumber\\
&=\gamma_0^2\sigma^2_s ( (1-\rho)I_{n'} + \rho \ve{1}_m \otimes E_{l'}) + (\gamma_0^2\sigma^2_v + \sigma^2_w) I_{n'}, \nonumber
\end{align}
where $\otimes$ is the Kronecker product and the $l'\times l'$ matrix $E_{l'}$ is defined as $E_{l'} = \diag(l,0,\dots,0)$. Then, 
\begin{align*}
\Xi_{0,n'}\Gamma_{n'} & = I_m\otimes\diag\left(\tfrac{\gamma_0^2 d_1}{\gamma_0^2(d_1 + \sigma_v^2) + \sigma_w^2},\dots,\tfrac{\gamma_0^2 d_{l'}}{\gamma_0^2(d_{l'} + \sigma_v^2) + \sigma_w^2}\right),
\end{align*}
where $d_1 = \sigma^2_s (1+(l-1)\rho)$ and $d_k = \sigma^2_s (1-\rho)$ if $k=2,\dots,l'$. Using (\ref{eq:LMGF-GaussQF})-(\ref{eq:LMGF-GaussQF-ddot}) of Corollary \ref{cor:GQF}, and denoting $\Gamma=\snr_\text{PFS-MAC}(l)$, the LMGF of the statistic (\ref{eq:LCS-2}) under $\Hip_0$ and its first and second derivatives are: 
\begin{align}
\mu_0(s_0) &= - \tfrac{n}{l}\log\left(1-s_0 \tfrac{\Gamma(1+(l-1)\rho)}{1+\Gamma(1+(l-1)\rho)}\right) \nonumber\\
&-n\left(\beta-\tfrac{1}{l}\right)\log\left(1-s_0 \tfrac{\Gamma(1-\rho)}{1+\Gamma(1-\rho)}\right)\label{eq:mu0},\\
\dot{\mu}_0(s_0) &= \tfrac{n}{l} \tfrac{\Gamma(1+(l-1)\rho)}{1+\Gamma(1+(l-1)\rho)-s_0 \Gamma(1+(l-1)\rho)}\nonumber\\
&+ n\left(\beta-\tfrac{1}{l}\right) \tfrac{\Gamma(1-\rho)}{1+\Gamma(1-\rho)-s_0 {\Gamma(1-\rho)}},\label{eq:mu0dot}\\
\ddot{\mu}_0(s_0) &= \tfrac{n}{l} \tfrac{(\Gamma(1+(l-1)\rho))^2}{(1+\Gamma(1+(l-1)\rho)-s_0 \Gamma(1+(l-1)\rho))^2}\nonumber\\
&+ n\left(\beta-\tfrac{1}{l}\right) \tfrac{(\Gamma(1-\rho))^2}{(1+\Gamma(1-\rho)-s_0 {\Gamma(1-\rho)})^2}.\label{eq:mu0ddot}
\end{align}
Now, the mean of the statistic $T_\text{LCS}$ under $\Hip_1$ is 
$m_1 = \Ex_1(T_\text{LCS})= n\Gamma (\rho + \beta(1-\rho))$.
Then, $\mu_0(s_0^*)=\tau_n = m_1$ implies that $s_0^*=1$. Replacing it in (\ref{eq:mu0})-(\ref{eq:mu0ddot}) and using Th. \ref{thm:Pe}, we obtain (\ref{eq:PfaEqui-LCS}).

\subsection{$P_\text{fa}$ for the FCS strategy}
The statistic $T_\text{FCS}$ is an LLR, for which the LMGFs under $\Hip_0$ and $\Hip_1$ satisfy 
\begin{equation}
\mu_0(s)=\mu_1(s-1),\label{eq:mu0mu1Equiv}
\end{equation} where $\mu_i(s)=\Ex_i(e^{sT_\text{FCS}})$. The direct computation of $\mu_0(s)$ implies to obtain firstly the inverse of $\Xi_{1,n'}$, which is not straightforward. An easier way is to compute $\mu_1(s)$ for which we only need to invert the diagonal matrix $\Xi_{0,n'}$ and then, to use (\ref{eq:mu0mu1Equiv}) to evaluate $\mu_0(s)$. The expression for $\mu_1(s)$ and its first and second derivatives are:
\small
\begin{align}
&\mu_1(s_1)  =\nonumber\\
& -s_1\log\left(1+ \Gamma(1+(n-1)\rho)\right) -s_1(n\beta-1)\log\left(1+ \Gamma(1-\rho)\right)\nonumber\\
&-\log\left(1-s_1\Gamma(1+(n-1)\rho)\right) -\left(n\beta-1\right)\log\left(1-s_1\Gamma(1-\rho)\right),\nonumber\\
&\dot{\mu}_1(s_1) = \nonumber\\ 
& -\log\left(1+ \Gamma(1+(n-1)\rho)\right) -(n\beta-1)\log\left(1+ \Gamma(1-\rho)\right)\nonumber\\
&+\frac{\Gamma(1+(n-1)\rho)}{1-s_1\Gamma(1+(n-1)\rho)} +\left(n\beta-1\right)\frac{\Gamma(1-\rho)}{1-s_1\Gamma(1-\rho)},\nonumber\\
&\ddot{\mu}_1(s_1)  = \nonumber\\
&\frac{(\Gamma(1+(n-1)\rho))^2}{(1-s_1\Gamma(1+(n-1)\rho))^2} +\left(n\beta-1\right)\frac{(\Gamma(1-\rho))^2}{(1-s_1\Gamma(1-\rho))^2}.\nonumber
\end{align}
\normalsize
$\tau_\text{FCS}=\Ex_1(T_\text{FCS})=\dot{\mu}_1(s_1)=\dot{\mu}_0(s_0)$ implies that $s_1=0$ and $s_0=1$.  Using this in (\ref{eq:mu0mu1Equiv}) and replacing all in (\ref{eq:generalPfa}), we obtain (\ref{eq:PfaEqui-FCS}).
Notice that this result coincides with the one in (\ref{eq:PfaEqui-LCS}) if $l$ is replaced by $n$. The difference between both FCS and LCS strategies is that the first one contemplates the full correlation of the process. 

\subsection{$P_\text{fa}$ for the ED strategy}
The energy detector $T_\text{ED}$ can be normalized by $\gamma_0^2\sigma_v^2 + \sigma_w^2$ without modifying the performance: $\tilde{T}_\text{ED}=\frac{T_\text{ED}}{\gamma_0^2\sigma_v^2 + \sigma_w^2}$. Its LMGF under $\Hip_0$ and first and second derivatives are:
\begin{align}
\mu_0(s_0) &= -n\log(1-s_0),\label{eq:muE}\\
\dot{\mu}_0(s_0) &= \frac{n}{1-s_0},\label{eq:muEdot}\\
\ddot{\mu}_0(s_0) &= \frac{n}{(1-s_0)^2}\label{eq:muEddot}.
\end{align}
$\tau_\text{ED}=\Ex_1(\tilde{T}_\text{ED})=n(1+\Gamma)=\dot{\mu}_0(s_0)$ implies that $s_0=\frac{\Gamma}{1+\Gamma}$. Using this in (\ref{eq:muE})-(\ref{eq:muEddot}) and replacing all in (\ref{eq:generalPfa}) results in (\ref{eq:PfaEqui-ED}). 
Notice that the same result can be obtained by replacing $l$ by $n$ in $(\ref{eq:PfaEqui-LCS})$ and taking $\rho=0$, given that the ED strategy does not consider the  correlation of the process.

\bibliographystyle{IEEEtran}
\bibliography{IEEEabrv,../Refs/refs}
\end{document}